\documentclass[11pt,a4paper,english]{amsart}

\usepackage[english]{babel}
\usepackage[T1]{fontenc}
\usepackage{amsmath}
\usepackage{amssymb}
\usepackage{amsthm}
\usepackage{latexsym}
\usepackage{amsfonts}

\usepackage{setspace} 
\onehalfspacing 
\setdisplayskipstretch{.1} 
\newlength{\abovebis} 
\setlength{\abovebis}{\abovedisplayskip} 
\newlength{\belowbis} 
\setlength{\belowbis}{\belowdisplayskip} 
\newlength{\aboveshortbis} 
\setlength{\aboveshortbis}{\abovedisplayshortskip} 
\newlength{\belowshortbis} 
\setlength{\belowshortbis}{\belowdisplayshortskip} 
\everydisplay\expandafter{%
  \the\everydisplay 
  \advance\abovedisplayskip\abovebis 
  \advance\belowdisplayskip\belowbis 
  \advance\abovedisplayshortskip\aboveshortbis 
  \advance\belowdisplayshortskip\belowshortbis 
} 

\def\real{\mathbb{R}}

\def\nat{\mathbb{N}}

\def\compl{\mathbb{C}}

\def\proj{\mathbb{C}P}
\def\harm{\mathcal{H}}
\def\meh{\; \!}

\theoremstyle{plain}

\newtheorem{lem}{Lemma}[section]
\newtheorem{theo}[lem]{Theorem}
\newtheorem{prop}[lem]{Proposition}

\newtheorem{cor}[lem]{Corollary}

\theoremstyle{definition}
\newtheorem{defn}{Definition}[section]
\newtheorem{rem}{Remark}[section]

\newenvironment{pfpropbel}{\begin{proof}[\bf Proof of Proposition \ref{propBelAn}.]}{\end{proof}}
\newenvironment{pfmain}{\begin{proof}[\bf Proof of Theorem \ref{maintheo}.]}{\end{proof}}
\newenvironment{pfcoro}{\begin{proof}[\bf Proof of Corollary \ref{coroll}.]}{\end{proof}}

\numberwithin{equation}{section}

\begin{document}
\title[Gel'fand-Calder\'on's inverse problem on bordered surfaces]{Gel'fand-Calder\'on's inverse problem for anisotropic conductivities on bordered surfaces in $\real^3$}
\author{Gennadi Henkin}
\address[G. Henkin]{Universit\'{e} Pierre et Marie Curie\\
case 247, 4, place Jussieu, 75252 Paris Cedex 05}
\email{henkin@math.jussieu.fr}
\author{Matteo Santacesaria}
\address[M. Santacesaria]{Centre de Math\'ematiques Appliqu\'ees, \'Ecole Polytechnique, 91128, Palaiseau, France}
\email{santacesaria@cmap.polytechnique.fr}
\subjclass{Primary 35R30; Secondary 32G05}
\keywords{Inverse conductivity problem; Riemann surface; Anisotropic conductivity; Dirichlet-to-Neumann operator; Reconstruction of complex structure; Beltrami equation}

\begin{abstract}
Let $X$ be a smooth bordered surface in $\real^3$ with smooth boundary and $\hat \sigma$ a smooth anisotropic conductivity on $X$. If the genus of $X$ is given, then starting from the Dirichlet-to-Neumann operator $\Lambda_{\hat \sigma}$ on $\partial X$, we give an explicit procedure to find a unique Riemann surface $Y$ (up to a biholomorphism), an isotropic conductivity $\sigma$ on $Y$ and a quasiconformal diffeomorphism $F: X \to Y$ which transforms $\hat \sigma$ into $\sigma$.

As a corollary we obtain the following uniqueness result: if $\sigma_1, \sigma_2$ are two smooth anisotropic conductivities on $X$ with $\Lambda_{\sigma_1}= \Lambda_{\sigma_2}$, then there exists a smooth diffeomorphism $\Phi: \overline X \to \overline X$ such that $\Phi|_{\partial X} = \mathrm{Id}$ and $\Phi_{\ast} \sigma_1 = \sigma_2$.
\end{abstract}

\maketitle

\section{Introduction}

Let $X$ be a bordered, oriented, two-dimensional manifold in $\real^3$. We suppose that $X$ possesses a conductivity $\sigma$: this means that we have the following relation
\begin{equation}
j(x)= \sigma(x) \meh du (x), \; x \in X, \label{ohm}
\end{equation}
where $u(x)$ is the voltage potential at $x$, $du (x)$ its differential, and $j(x)$ is the current flowing through $x$. Equation (\ref{ohm}) is just a differential version of Ohm's law. As $j$ is a 1-form, $\sigma$ represents a mapping from 1-forms to 1-forms, i.e. $\sigma$ is a global section of the vector bundle $T(X)^{\ast} \otimes T(X)$ (where $T(X)$, $T(X)^{\ast}$ are respectively the tangent and the cotangent bundle of $X$). It is customary to assume that $\sigma (x)$ is both positive definite and symmetric, in a sense that will be explained later.

We shall also assume that there is no displacement current; thus for any smooth subdomain $X' \subset X$ we have Green's theorem
$$ 0 = \int_{\partial X'} j = \int_{ X'} d j.$$
Since $X'$ is arbitrary, we conclude that $dj = d \sigma \meh d u = 0 $ in $X$.

In general, conductivities are anisotropic; we say that a conductivity is isotropic if the relationship between voltage and current is independent of the direction.

In order to introduce the problem, we define the Dirichlet-to-Neumann operator $\Lambda_{\sigma}: C^1(\partial X) \to L^p ( T(\partial X)^{\ast})$, $p < \infty$ as 
\begin{equation}
\Lambda_{\sigma} f = \sigma \meh du|_{\partial X},
\end{equation}
where $\sigma \in C^3(T(X)^{\ast} \otimes T(X))$, $f \in C^1(\partial X)$ and $u$ is the unique $W^{1,p}(\overline X)$-solution of the Dirichlet problem
\begin{equation} \label{DP}
d \sigma \meh d u =0 \; \textrm{on } X, \; u|_{\partial X} = f.
\end{equation}
Our aim is to answer the following question, that is a variation of an inverse boundary value problem posed by Gel'fand \cite{Ge} and Calder\'on \cite{C}: which information about $X$ and $\sigma$ can be extracted from the mapping $\Lambda_{\sigma}$?

The main result of this paper is:

\begin{theo}  \label{maintheo}
Let $X$ be a bordered, $C^3$, oriented, two-dimensional manifold with $C^3$ boundary and let $\hat \sigma$ be a $C^3$-anisotropic conductivity on $X$. From the Dirichlet-to-Neumann operator $\Lambda_{\hat \sigma} : C^1(\partial X) \to L^p ( T(\partial X)^{\ast})$, $p < \infty$, and from the knowledge of the genus of $X$, we can find by an explicit procedure:
\begin{itemize}
\item[\textit{i)}] a bordered Riemann surface $Y$,

\item[\textit{ii)}] an isotropic conductivity $\sigma$ on $Y$,

\item[\textit{iii)}] a $C^3$ diffeomorphism $F : X \to Y$ such that $F_{\ast} \hat \sigma = \sigma$.
\end{itemize}
Moreover, if $\tilde Y$ is another Riemann surface, $\tilde \sigma$ an isotropic conductivity on $\tilde Y$ and $\tilde F: X \to \tilde Y$ a $C^3$ diffeomorphism such that $\tilde F_{\ast} \hat \sigma = \tilde \sigma$, then $\Psi = \tilde F \circ F^{-1} : Y \to \tilde Y$ is a biholomorphism such that $\Psi_{\ast} \sigma = \tilde \sigma$.
\end{theo}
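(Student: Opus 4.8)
The plan is to reduce the anisotropic problem to an isotropic one through the complex structure carried by the conductivity, and then to reconstruct the underlying Riemann surface and the isotropic conductivity from the boundary data.

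First I would observe that on the oriented surface $X$ a positive, symmetric $C^3$ conductivity $\hat\sigma$ canonically determines a $C^3$ conformal (complex) structure: pointwise, the endomorphism $\hat\sigma$ singles out at each point an ellipse of ``unit-cost'' directions, i.e.\ a Beltrami coefficient $\mu_{\hat\sigma}$ with $\|\mu_{\hat\sigma}\|_{\infty}<1$, and in isothermal coordinates for this structure the equation $d(\hat\sigma\meh du)=0$ becomes a \emph{scalar} (isotropic) conductivity equation with coefficient $\sqrt{\det\hat\sigma}$. Invoking the existence theory for the Beltrami equation (Theorem~\ref{teobeltrami}), I obtain a $C^3$ quasiconformal diffeomorphism $F$ onto a bordered Riemann surface $Y=F(X)$ carrying the standard complex structure, and by construction $\sigma:=F_{\ast}\hat\sigma$ is isotropic. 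This yields i), ii), iii) \emph{abstractly}; the real content is that $Y$, $\sigma$ and $F$ are recoverable from $\Lambda_{\hat\sigma}$ alone.

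The core of the argument is therefore the explicit reconstruction of $(Y,\sigma)$ from the boundary operator. Here I would first extract from $\Lambda_{\hat\sigma}$ the conformal structure induced on $\partial X$ together with its Hilbert transform, equivalently the boundary values of the (possibly multivalued) functions $u+i\tilde u$ that are holomorphic on $Y$ for the reconstructed structure; this is the step where $\Lambda_{\hat\sigma}$ is used to pass from real potentials to their harmonic conjugates. Combining these boundary data with the prescribed genus, I would reconstruct $Y$ up to biholomorphism by solving the relevant $\bar\partial$/Beltrami problems to produce enough global holomorphic functions to realize $Y$ (for instance as a branched piece of $\proj^N$), the genus fixing the topological type that must be glued in across $\partial X$. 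Finally, with $Y$ in hand, the isotropic conductivity $\sigma$ is determined by $\Lambda_{\hat\sigma}$ through the two-dimensional isotropic reconstruction, reducing to a Schr\"odinger potential by the substitution $u\mapsto\sqrt{\sigma}\,u$ and recovering $\sigma$ from its boundary data.

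For the uniqueness (``Moreover'') part, set $\Psi=\tilde F\circ F^{-1}\colon Y\to\tilde Y$, a $C^3$ orientation-preserving diffeomorphism. The conductivity identity is immediate: $\Psi_{\ast}\sigma=\tilde F_{\ast}(F^{-1}_{\ast}F_{\ast}\hat\sigma)=\tilde F_{\ast}\hat\sigma=\tilde\sigma$. It remains to see that $\Psi$ is holomorphic. Since $\sigma$ and $\tilde\sigma$ are \emph{isotropic} on the Riemann surfaces $Y$, $\tilde Y$, each induces (by the first paragraph) exactly the given complex structure of its surface; as $\Psi$ carries $\sigma$ to $\tilde\sigma$ it must carry the complex structure of $Y$ to that of $\tilde Y$, so its Beltrami coefficient vanishes. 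An orientation-preserving diffeomorphism with vanishing Beltrami coefficient is biholomorphic, which closes the argument. I expect the main obstacle to be the middle step: the boundary-level constructions (conformal structure on $\partial X$, Hilbert transform, harmonic conjugates) are comparatively direct, but globalizing the boundary holomorphic data into an actual Riemann surface recovered up to biholomorphism using only the genus requires the full machinery of reconstructing a bordered Riemann surface from the boundary values of its holomorphic functions, and it is there that both the explicitness and the uniqueness up to biholomorphism are genuinely at stake.
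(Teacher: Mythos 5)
Your overall strategy (straighten $\hat\sigma$ by solving a Beltrami equation so that $F_{\ast}\hat\sigma$ becomes isotropic, then reconstruct the target surface and the isotropic conductivity) is the same as the paper's, and your treatment of the ``Moreover'' part is correct and essentially matches the paper: $\Psi_{\ast}\sigma=\tilde\sigma$ by functoriality of the push-forward, and since a diffeomorphism carrying an isotropic conductivity to an isotropic one satisfies ${}^t(D\Psi)(D\Psi)\propto I$, it is conformal, hence biholomorphic (the paper phrases this as $F$ and $\tilde F$ solving the same Beltrami equation, so the transition map is holomorphic; it also has to deal with possible self-intersections of the image via normalizations, which you do not address but which is a technicality here).

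The genuine gap is the middle step, which is the actual content of the theorem (an \emph{explicit} procedure). You propose to read off from $\Lambda_{\hat\sigma}$ ``the conformal structure on $\partial X$ and its Hilbert transform'' and then to globalize by producing ``enough holomorphic functions,'' but you give no mechanism for doing either, and you yourself flag this as the obstacle. Without a normalization pinning down $F$, the pair $(Y,F)$ is only defined up to composition with biholomorphisms, so there is no well-defined object to reconstruct; and boundary conformal data plus the genus do not by themselves determine the interior complex structure in any algorithmic way. The paper's resolution is concrete at every stage: the genus is used (via Halphen) to embed $X$ in an affine algebraic curve $V\subset\compl^2$ on which $\hat\sigma$ is extended by the identity; Theorem \ref{teobeltrami} produces a \emph{unique} global Beltrami solution $F=(w_1,w_2)$ with prescribed asymptotics $w_j=z_j+O(1)$ at infinity, which removes the biholomorphism ambiguity; Faddeev-type anisotropic solutions $\hat\psi_\theta(\cdot,\lambda)$ are shown to satisfy the Fredholm integral equation \eqref{fred} on $\partial X$ whose kernel is built from $\Lambda_{\hat\sigma}$, so $\hat\psi_\theta|_{\partial X}$ is computable; the large-$\lambda$ asymptotics \eqref{recbel} then yield $F|_{\partial X}$, hence the concrete curve $\Gamma=F(\partial X)\subset\compl^2$; Harvey--Shiffman uniqueness plus the Cauchy moment formulas \eqref{system} recover the interior points of $Y$; and finally the isotropic reconstruction of \cite{HN} (Theorem \ref{recona}) recovers $\sigma$. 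Your sketch contains none of these ingredients (no Faddeev solutions, no integral equation tying them to $\Lambda_{\hat\sigma}$, no normalization at infinity, no filling-in of $\Gamma$), so as written it does not establish parts i)--iii) as an explicit, uniquely determined construction from the boundary operator.
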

The push-forward of a conductivity $\sigma$ by a diffeomorphism $\Phi:\overline X \to \overline Y$ is defined, following \cite[\S 1]{Sylvester}, as
\begin{equation} \label{push}
(\Phi_{\ast} \sigma) \alpha = \Phi_{\ast} (\sigma (\Phi^{\ast} \alpha)),
\end{equation}
where $\Phi^{\ast} \alpha$ denotes the pull-back of the 1-form $\alpha$ and $\Phi_{\ast}=(\Phi^{-1})^{\ast}$ denotes the pull-back by $\Phi^{-1}$ acting on the 1-form $\sigma (\Phi^{\ast} \alpha)$.

L. Tartar was the first to remark (see \cite{Kohn}) that, when $\Phi:\overline X \to \overline X$, this new conductivity $\Phi_{\ast} \sigma$ has the same boundary measurements as $\sigma$ if $\Phi|_{\partial X}=\mathrm{Id}$, where $\mathrm{Id}$ is the identity map. Thus, it is clearly not possible to determine $\sigma$ uniquely from $\Lambda_{\sigma}$; more specifically we cannot find more than \textit{i)--iii)} from $\Lambda_{\sigma}$. This is pointed out in the following corollary of our main result.

\begin{cor} \label{coroll}
Let $X$ be a bordered, $C^3$, oriented, two-dimensional manifold with $C^3$ boundary and let $\sigma_1, \sigma_2$ be two $C^3$-anisotropic conductivities on $X$. If $\Lambda_{\sigma_1} = \Lambda_{\sigma_2}$ then there exists a $C^3$ diffeomorphism $\Phi : \overline X \to \overline X$ such that $\Phi|_{\partial X} = \mathrm{Id}$ and $\sigma_2 = \Phi_{\ast} \sigma_1$.
\end{cor}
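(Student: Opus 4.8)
The plan is to derive the corollary from Theorem \ref{maintheo} by applying it twice, once to each conductivity, and exploiting that the reconstructed Riemann surface and isotropic conductivity depend only on the (common) boundary data and on the genus.

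First I would apply Theorem \ref{maintheo} to $\sigma_1$, obtaining a bordered Riemann surface $Y_1$, an isotropic conductivity $\tau_1$ on $Y_1$, and a $C^3$ diffeomorphism $F_1 : X \to Y_1$ with $(F_1)_{\ast} \sigma_1 = \tau_1$; applying it to $\sigma_2$ gives $(Y_2, \tau_2, F_2)$ with $(F_2)_{\ast} \sigma_2 = \tau_2$. The crucial observation is that the surface and the isotropic conductivity produced by the explicit procedure depend only on $\Lambda_{\hat \sigma}$ and on the genus of $X$. Since $\Lambda_{\sigma_1} = \Lambda_{\sigma_2}$ and the genus is fixed, we may take $Y_1 = Y_2 =: Y$ and $\tau_1 = \tau_2 =: \tau$; equivalently, one invokes the uniqueness-up-to-biholomorphism clause of Theorem \ref{maintheo} to identify the two outputs. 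The diffeomorphisms $F_1, F_2$, by contrast, also encode the interior conductivities $\sigma_1, \sigma_2$ and need not coincide.

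Next I would set $\Phi = F_2^{-1} \circ F_1 : \overline X \to \overline X$, which is a $C^3$ diffeomorphism. Using the functoriality of the push-forward, $(G \circ H)_{\ast} = G_{\ast} \circ H_{\ast}$, which follows directly from definition (\ref{push}) together with the contravariance of the pullback, I would compute
$$\Phi_{\ast} \sigma_1 = (F_2^{-1})_{\ast} (F_1)_{\ast} \sigma_1 = (F_2^{-1})_{\ast} \tau = \sigma_2,$$
the last equality because $(F_2)_{\ast} \sigma_2 = \tau$ yields $\sigma_2 = (F_2^{-1})_{\ast} \tau$ (indeed $(F_2^{-1})_{\ast} (F_2)_{\ast} = \mathrm{Id}$). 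This already establishes $\sigma_2 = \Phi_{\ast} \sigma_1$.

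It remains to check $\Phi|_{\partial X} = \mathrm{Id}$, and this is the step I expect to be the main obstacle. It amounts to showing $F_1|_{\partial X} = F_2|_{\partial X}$ as maps $\partial X \to \partial Y$. The point is that the boundary behaviour of each $F_i$ is governed entirely by the boundary data: the reconstruction identifies $\partial Y$ with $\partial X$ through $\Lambda_{\sigma_i}$, and $F_i|_{\partial X}$ is precisely the associated boundary parametrization. Since $\Lambda_{\sigma_1} = \Lambda_{\sigma_2}$, these two boundary maps coincide, whence $\Phi|_{\partial X} = F_2^{-1} \circ F_1|_{\partial X} = \mathrm{Id}$. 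Making this rigorous requires returning to the construction underlying Theorem \ref{maintheo} to confirm that $F_i$ restricts to the same boundary map whenever the Dirichlet-to-Neumann operators agree; this bookkeeping is where the real content of the argument lies.
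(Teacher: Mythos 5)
Your proposal is correct and follows essentially the same route as the paper: both apply the reconstruction to each conductivity, observe that every step of the procedure ($\hat\psi_\theta|_{\partial X}$ via \eqref{fred}, then $F|_{\partial X}$ via \eqref{recbel}, then $Y$ via \eqref{system}, then the isotropic conductivity via Theorem \ref{recona}) depends only on $\Lambda_{\hat\sigma}$ and the genus, and then set $\Phi = F_2^{-1}\circ F_1$. The only detail the paper adds beyond your sketch is passing to a normalization $Y'$ of $Y$ to handle possible self-intersection points before inverting $F_2$, since at the construction level the Beltrami map is a priori only an immersion.
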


\subsection*{Historical remarks} These results were obtained earlier only for $X \subset \real^2$. Even for this case, Theorem \ref{maintheo} was obtained only recently by the authors \cite{HS}, using arguments and results taken from \cite{Novikov}, \cite{Nachman}, \cite{Sylvester}, \cite{Astala}, \cite{Gutarts}.

Corollary \ref{coroll}, for $X \subset \real^2$, was proved in an original paper by Sylvester \cite{Sylvester} for $C^3$ conductivities close to constants (the last restriction was eliminated in \cite{Nachman}).
From \cite{Sylvester} one can deduce (see \cite{HM}) that for any bordered surface $X$, equipped with an anisotropic conductivity, there exists a unique complex structure, i.e. $d = \overline \partial + \partial$, for which the equation $d \hat \sigma \meh d u =0$ transforms into $d \sigma \meh d^c u=0$, where $\sigma$ is a positive function (which represents an isotropic conductivity) and $d^c = i (\overline \partial - \partial)$.

In the context of surfaces Corollary \ref{coroll} is the first uniqueness result for the inverse anisotropic-conductivity problem. A uniqueness result (even with partial data) for the inverse isotropic-conductivity problem was recently obtained in \cite{Gui}, using stationary phase techniques from \cite{Bu}.

From the reconstruction viewpoint, Theorem \ref{maintheo} is the first result on the recovering of the above-mentioned complex structure of a bordered surface with known genus from its Dirichlet-to-Neumann operator. A method for recovering isotropic conductivities on surfaces with known genus was recently developed in  \cite{HN}. In addition, a reconstruction procedure for complex surfaces with constant conductivity was obtained in \cite{HM}.

\subsection*{Scheme of the proofs}
The main idea behind this paper is the same as in \cite{Sylvester}, i.e. to reduce the problem to the isotropic case.

We first equip our real surface with some complex structure (e.g. the complex structure induced by the euclidean metric of $\real^3$) and then we embed the surface in the complex affine space $\compl^3$ as a domain $X$ on a nonsingular affine algebraic curve $V$. Next, we extend $\hat \sigma$ by a constant on $V \setminus X$. Successively, we find a global analogue of isothermal coordinates, uniquely determined on $V$ by a given anisotropic conductivity and natural asymptotic conditions. This is accomplished by proving existence and uniqueness of special solutions of a certain Beltrami equation; here we follow the works started by Gauss \cite{Ga} and fully developed by Ahlfors \cite{Ahlfors} and Vekua \cite{Vekua}, along with the Hodge-Riemann decomposition \cite{Ho} and the generalization of related operator estimates.

We cannot expect, like in the plane, that the deformed surface will live in the same compactified surface after the change of coordinates. Thus, we will find a new surface $W$ where our conductivity is isotropic; in general this surface will be algebraic, but possibly with intersection points.

Thanks to this global Beltrami solution $F$ and results in \cite{HN} for the isotropic case, we can prove existence and uniqueness of Faddeev-type aniso\-tro\-pic functions $\hat \psi_{\theta} (z,\lambda)$ on $V$: a two-parameter family of solutions of the anisotropic conductivity equation \eqref{DP} on $V$ with exponential asymptotics (see \eqref{faddeev}), originally introduced in \cite{Faddeev}. We will also prove a formula, inspired by \cite{Sylvester}, that allows us to reconstruct the boundary values $F|_{\partial X}$ of our Beltrami solution starting from $\hat \psi|_{\partial X}$. We then show how to reconstruct $\hat \psi|_{\partial X}$ from the knowledge of $\Lambda_{\hat \sigma}$, through a Fredholm-type integral equation.

The reconstruction procedure works as follows: starting from $\Lambda_{\hat \sigma}$ one reconstructs $\hat \psi|_{\partial X}$ and then $F|_{\partial X}$; thus one recovers $\Gamma = F(\partial X)$ and also $\Lambda_{F_{\ast} \hat \sigma} = \Lambda_{\sigma}$. Since $\Gamma$ has to be the boundary $\partial Y$ of a Riemann surface $Y$, one recovers that surface through Cauchy-type formulas. Finally, from the knowledge of $\Lambda_{\sigma}$, the application of results in \cite{HN} yields $F_{\ast} \hat \sigma = \sigma$ on $Y$.

Our scheme can be summarized in the following diagram:

$$\Lambda_{\hat \sigma} \to \hat \psi|_{\partial X} \to F|_{\partial X} \to \partial Y \to Y \to \sigma.$$

\subsection*{An open problem} It is known that, for constant conductivities, the Di\-rich\-let-to-Neu\-mann operator for $dd^c u=0$ determines the genus of a surface; this is a consequence of results in \cite{LU}, \cite{Bel}, \cite{HM} and \cite{GG}. These results can be generalized to the case of conductivities close to constants.

In the general case of non-constant conductivities, the unique determination of the genus of a bordered surface from its Dirichlet-to-Neumann operator is still an open question.

\section{Preliminaries} \label{prelim}

\subsection{Basic definitions} \label{basic}

Let us provide more details about the objects discussed in the introduction. 

We say that a conductivity $\sigma$ is positive definite and symmetric, if, for $a,b \in T_x(X)^{\ast}$, $x \in X$,
\begin{align}
a \wedge \sigma b &= b \wedge \sigma  a, \label{sigma1} \\
a \wedge \sigma a &= \varphi (x) dx^1 \wedge dx^2, \; \; \varphi (x) \geq C_{\sigma} |a|^2 > 0, \label{sigma2}
\end{align}
where $x^1, x^2$ are positively oriented coordinates and $| \meh \meh|$ is the euclidean norm. From (\ref{sigma1}) and (\ref{sigma2}) one sees that locally, in the chart $(U_{\alpha}, x_{\alpha})$, our conductivity can be written as
\begin{equation}
\sigma|_{U_{\alpha}} = \sum_{i,j =1}^2 \sigma_{\alpha}^{ij} (-1)^{j-1} dx_{\alpha}^{3-j} \wedge \frac{\partial}{\partial x_{\alpha}^i},
\end{equation}
where the matrix $(\sigma_{\alpha}^{ij})$ is positive definite and symmetric ($>C_{\sigma} I$).

With this notation, an isotropic conductivity $\sigma$ is just a conductivity whose associated matrix has the form $(\sigma^{ij})=\sigma_0 I$, where $\sigma_0 : X \to \real$ is a bounded positive function and $I$ is the identity matrix.

Equation (\ref{DP}) now reads locally
\begin{equation}
d \sigma \meh d u = \left( \sum_{i,j=1}^2 \frac{\partial}{\partial x^i} \left( \sigma^{ij} \frac{\partial u}{\partial x^j} \right) \right) dx^1 \wedge dx^2 = 0. \label{diri}
\end{equation}

Let us now explain some general properties of the push-forward of a conductivity.
Let $\Phi : \overline X \to \overline Y$ be a diffeomorphism between bordered surfaces and $\sigma$ a conductivity on $X$. We define the push-forward $\Phi_{\ast} \sigma$ of $\sigma$ as in (\ref{push}); locally, it reads
\begin{equation} \nonumber
\Phi_{\ast} \sigma =  \left( \frac{{}^t (D\Phi) \sigma (D\Phi)}{|\det(D\Phi)|}\right) \circ \Phi^{-1},
\end{equation}
where $D\Phi$ is the matrix differential of $\Phi$ and $\sigma$ is seen as its associated matrix.

We recall that if $\Phi$ is conformal, then $\frac{{}^t (D\Phi)}{|\det(D\Phi)|} = (D\Phi)^{-1}$, thus the push-forward of an isotropic conductivity by a conformal diffeomorphism is still isotropic.

We would also like to compare the two Dirichlet-to-Neumann operators $\Lambda_{\sigma}$ and $\Lambda_{\Phi_{\ast} \sigma}$. By pull-back properties, if $u$ satisfies $d \sigma \meh d u = 0$, then $\Phi_{\ast} u = u \circ \Phi^{-1}$ satisfies $d \Phi_{\ast} \sigma \meh d (\Phi_{\ast} u) =0$. This fact implies that the unique solution of the Dirichlet problem
\begin{equation} \label{DP2}
d (\Phi_{\ast} \sigma) \meh d v =0 \; \textrm{on } Y, \; v|_{\partial Y} = f \circ (\Phi|_{\partial X})^{-1}
\end{equation}
is just $v = \Phi_{\ast} u$, where $u$ is the unique solution of
\begin{equation}
d \sigma \meh d u =0 \; \textrm{on } X, \; u|_{\partial X} = f .
\end{equation}
So if $Y = X$ and $\Phi_{\partial X} = \textrm{Id}$ we see that $\Lambda_{\Phi_{\ast} \sigma} = \Lambda_{\sigma}$; in general, it is important to underline the fact that $\Lambda_{\Phi_{\ast} \sigma}$ is completely determined by $\Lambda_{\sigma}$ and $\Phi|_{\partial X}$.

\subsection{Complex viewpoint}

Here we will introduce some complex notation. We define standard complex coordinates $z = x^1+ix^2$, $\overline z= x^1 - i x^2$, $dz =d x^1 + i dx^2$, $d \overline z = dx^1 - i dx^2$, $\frac{\partial}{\partial z} = \frac 1 2 \left( \frac{\partial}{\partial x^1} -i \frac{\partial}{\partial x^2} \right)$, $\frac{\partial}{\partial \overline z} = \frac 1 2 \left( \frac{\partial}{\partial x^1} +i \frac{\partial}{\partial x^2} \right)$.

We can now rewrite the conductivity $\sigma$ with the complex coordinates; we obtain
\begin{equation}
\sigma|_{U_{\alpha}} = (\sigma_{\alpha}^0(-i dz_{\alpha}) + \overline \sigma_{\alpha}^1 (i d \overline z_{\alpha})) \wedge \frac{\partial}{\partial z_{\alpha}} + (\sigma_{\alpha}^1 (-i dz_{\alpha}) + \sigma_{\alpha}^0 (i d \overline z_{\alpha})) \wedge \frac{\partial}{\partial \overline z_{\alpha}}
\end{equation}
where
\begin{equation}
\sigma^0 = \frac{\sigma^{22}+\sigma^{11}}{2}, \; \; \; \sigma^1= \frac{\sigma^{11} - \sigma^{22}}{2} - i \sigma^{12}.
\end{equation}
We have chosen to represent the image of $\sigma$ in the basis $\{-i \meh dz, \; i d \overline z\}$ in order to have the hermitian matrix $\left( \begin{matrix} \sigma^0 & \sigma^1 \\ \overline \sigma^1 & \sigma^0 \end{matrix} \right).$

One verifies that these new coefficients satisfy the following transformation rules
\begin{equation} \label{transf}
\sigma_{\alpha}^0 = \sigma_{\beta}^0, \; \textrm{and} \; \sigma_{\alpha}^1=\sigma_{\beta}^1 \frac{d z_{\beta}}{d z_{\alpha}} \overline{\left( \frac{d z_{\beta}}{d z_{\alpha}} \right)}^{-1}
\end{equation}

Let us remark that, if $\sigma$ is isotropic, represented by the matrix $\sigma_0 I$, then equation \eqref{diri} reads
$$d \sigma \meh d u = d \sigma_0 \meh d^c u = 0.$$
Throughout all the paper we will always identify an isotropic conductivity $\sigma$ with its associated function $\sigma_0$ to simplify notation; thus the conductivity equation, in this case, will always be written $d \sigma \meh d^c u = 0$ and $\Lambda_{\sigma} f = \sigma d^c u|_{\partial X}$, with $u$ the solution of \eqref{DP}.

\subsection{Embedding in projective space} \label{secembedding}
Let $\proj^3$ be the complex projective space with homogeneous coordinates $w= (w_0 : w_1 : w_2 : w_3)$ and let $\proj^2_{\infty} = \{ w \in \proj^3 \; : \; w_0=0 \}$. Then $\proj^3 \setminus \proj^2_{\infty}$ can be considered as a complex affine space with coordinates $z_k= w_k / w_0$, $k= 1,2,3$.
Thanks to a classical result of G. Halphen (cfr. \cite[Prop. 6.1]{Ha}) any compact Riemann surface of genus $g$ can be embedded in $\proj^3$ as a projective algebraic curve $\tilde V$, which intersects $\proj^2_{\infty}$ transversally in $d > g$ points, where $d \geq 1$ if $g =0$, $d \geq 3$ if $g=1$ and $d \geq g+3$ if $g \geq 2$.

Without loss of generality we can assume the following facts:
\begin{itemize}
\item[i)] $V= \tilde V \setminus \proj^2_{\infty}$ is a connected affine algebraic curve in $\compl^3$ defined by polynomial equations $V = \{ z \in \compl^3 : p_1(z)= p_2(z) = p_3(z) =0 \}$ such that $\mathrm{rank} \! \! \left[ \frac{\partial p_1}{\partial z}(z), \frac{\partial p_2}{\partial z}(z), \frac{\partial p_3}{\partial z}(z)\right] \equiv 2, \; \forall z \in V$;

\item[ii)] $\tilde V \cap \proj^2_{\infty} = \{ \beta_1, \ldots , \beta_d \}$, where
$$ \beta_l= (0 : \beta^1_l : \beta^2_l : \beta^3_l), \left( \frac{\beta^2_l}{\beta^1_l}, \frac{\beta^3_l}{\beta^1_l} \right) \in \compl^2, \; l=1, 2, \ldots, d;$$

\item[iii)] for $r_0 >0$ large enough
$$\det \left[ \begin{array}{cc} \frac{\partial p_{\alpha}}{\partial z_2} & \frac{\partial p_{\alpha}}{\partial z_3} \\ \frac{\partial p_{\beta}}{\partial z_2} & \frac{\partial p_{\beta}}{\partial z_3} \end{array} \right] \neq 0, \; \mathrm{for} \; z \in V : |z_1| \geq r_0 \; \mathrm{and} \; \alpha \neq \beta;$$

\item[iv)] for $|z|$ sufficiently large we have
$$\frac{dz_2}{dz_1}|_{V_l}=\gamma_l + O \left( \frac{1}{z^2_1} \right), \; \frac{dz_3}{dz_1}|_{V_l}=\tilde \gamma_l + O \left( \frac{1}{z^2_1} \right),$$
where $\gamma_l, \tilde \gamma_l \neq 0$, for $l= 1, \ldots,d, \; d \geq 2$, $V_0 = \{ z \in V  :  |z_1| \leq r_0 \}$ and $V \setminus V_0 = \cup_{l=1}^d V_l$ (the $V_l$ are the connected components of $V \setminus V_0$, for $l=1, \ldots, d$).
\end{itemize}
We equip $\tilde V$ with the projective volume form $d d^c \log (1+|z|^2)$ and $V$ with the euclidean volume form $d d^c |z|^2$; we can thus consider the spaces $L^p_{0,1}(\tilde V)$ and $L^p_{0,1}(V)$ of $L^p$ (0,1)-forms, equipped with the norms $\| \; \|_{L^p_{0,1}(\tilde V)}$ and $\| \; \|_{L^p_{0,1}(V)}$, respectively. There is a canonical surjective map $C^{\infty}_{0,1}(\tilde V) \to C^{\infty}_{0,1}(V)$, so that we can compare the two above-defined norms; indeed, for $p \geq 2$ and $f \in L^p_{0,1}(\tilde V)$, we have that $\| f\|_{L^p_{0,1}(V)} \leq \| f \|_{L^p_{0,1}(\tilde V)}$ (in particular $\| f\|_{L^2_{0,1}(V)} = \| f \|_{L^2_{0,1}(\tilde V)}$). This yields the inclusion $L^p_{0,1}(\tilde V) \subset L^p_{0,1}(V)$, for $p \geq 2$ (through the canonical map), and the same result is true for (1,0)-forms, i.e. $L^p_{1,0}(\tilde V) \subset L^p_{1,0}(V)$, for $p \geq 2$.

In section \ref{secBeltrami}, the norm $\| \; \|_p$ will always stand for the affine norm $\| \; \|_{L^p_{0,1}(V)}$ (or $\| \; \|_{L^p_{1,0}(V)}$), although it will be use to make some estimates on forms defined on the whole compact surface $\tilde V$.\smallskip

We now define the spaces $\tilde W^{1,p} (\tilde V) = \{ F \in L^{\infty} (\tilde V) : \partial F \in L^{ p}_{1,0} (\tilde V) \}$, $\tilde W_{0,1}^{1,p} (\tilde V) = \{ F \in L_{0,1}^{\infty} (\tilde V) : \partial F \in L^{ p}_{1,1} (\tilde V) \}$ for $1< p < \infty$ and $H_{0,1}(\tilde V)$ the space of antiholomorphic (0,1)-forms on $\tilde V$. 

From the Hodge-Riemann decomposition theorem we have, for every $\Phi_0 \in W^{1,p}_{0,1}(\tilde V)$, $\Phi_0 = \overline \partial ( \overline \partial^{\ast} G \Phi_0) + \harm\Phi_0$, where $\harm\Phi_0 \in H_{0,1} (\tilde V)$ is defined as
$$\harm \Phi_0= \sum_{j=1}^g \left(\int_V \Phi_0 \wedge \omega_j \right) \overline \omega_j,$$
with $\{ \omega_j \}$ an orthonormal basis of holomorphic (1,0)-forms on $\tilde V$ and $G$ is the Hodge-Green operator for the Laplacian $\overline \partial \overline \partial^{\ast} + \overline \partial^{\ast} \overline \partial$ on $\tilde V$ with the following properties: $G(H_{0,1} (\tilde V))=0$, $\overline \partial G= G\overline \partial$, $\overline \partial^{\ast} G = G \overline \partial^{\ast}$.

We also define the operator $R$, for $f \in C^{\infty}_{0,1}(\tilde V)$, as $Rf (x)  = \overline \partial^{\ast} Gf (x) - \overline \partial^{\ast} G f (\beta_1)$; we will see, as a consequence of Lemma \ref{lemest}, that $R : L^p_{0,1}(\tilde V) \to \tilde W^{1,p}(\tilde V)$, for $2 < p < \infty$.

In the rest of the paper we will suppose for simplicity that $V= \{ z \in \compl^2 : P(z) = 0 \}$ is an affine algebraic curve in $\compl^2$. 

\subsection{Remarks on the extension of $\hat \sigma$ on $V \setminus X$}

In the following of the paper, we will always suppose that $\hat \sigma$ is the identity in a neighbourhood of $\partial X$ (i.e. its associated matrix is the identity). In this way we could easily extend $\hat \sigma$ to $V$ by putting $(\hat \sigma^{ij}) = I$ on $V \setminus X$, and this new conductivity will still be $C^3$.

This simplification is possible thanks to the following construction.
After embedding $X = X_1$ as an open set of the affine algebraic curve $V \subset \compl^2$ above, we can find an open set $X_2 \subset V$ with the following properties:
\begin{itemize}
\item[i)] $X_1 \subset X_2 \subset V$,
\item[ii)] $X_2$ has a $C^1$ boundary (the same smoothness as $\partial X$),
\item[iii)] $\hat \sigma$ can be extended to $X_2$ as a $C^3$ conductivity $\hat \sigma'$ such that $\hat \sigma' \equiv I$ in a neighbourhood of $\partial X_2$.
\end{itemize}
This is possible because one can reconstruct $\hat \sigma|_{\partial X_1}$ and its derivatives at the boundary from $\Lambda_{\hat \sigma}$ as in \cite{Kohn2}.

Thus we only have to show that $\Lambda_{\hat \sigma'}$ can be determined by $\Lambda_{\hat \sigma}$ and $\hat \sigma'|_{X_2 \setminus X_1}$. This can be done as in \cite[Sec. 6]{Nachman}.

The Dirichlet-to-Neumann maps $\Lambda^{ij}$ are defined as follows: we consider, for $i, j=1,2$, $f_j \in C^{1}(\partial X_j)$ and $u_j \in C^1(X_2 \setminus \overline X_1)$ the solution of the Dirichlet problem $d \hat \sigma' \meh d u_i = 0$ in $X_2 \setminus \overline X_1$ such that $u_1|_{\partial X_1} = f_1$, $u_1|_{\partial X_2} = 0$, respectively $u_2|_{\partial X_1} = 0$, $u_2|_{\partial X_2} = f_2$. Then we define
$$\Lambda^{ij} f_j = \hat \sigma' du_j|_{\partial X_i}$$
and we have the following relation.

\begin{prop}
Under our assumption, $\Lambda_{\hat \sigma} - \Lambda^{11}$ is an invertible operator $\Lambda_{\hat \sigma} - \Lambda^{11} : C^{1}(\partial X_1) \to C^{0}(\partial X_1)$ and
\begin{equation} \nonumber
\Lambda_{\hat \sigma'} = \Lambda^{22} + \Lambda^{21} (\Lambda_{\hat \sigma} - \Lambda^{11})^{-1} \Lambda^{12}.
\end{equation}
\end{prop}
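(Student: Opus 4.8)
The plan is to realize $\Lambda_{\hat \sigma'}$ as a Schur complement obtained by coupling the interior problem on $X_1$ with the problem on the collar $A := X_2 \setminus \overline X_1$ across the common interface $\partial X_1$, exactly as in \cite[Sec. 6]{Nachman}. Fix $f \in C^1(\partial X_2)$ and let $U \in W^{1,p}(\overline{X_2})$ be the unique solution of $d \hat\sigma' \meh dU = 0$ with $U|_{\partial X_2} = f$, so that $\Lambda_{\hat\sigma'} f = \hat\sigma' \meh dU|_{\partial X_2}$. Write $g := U|_{\partial X_1}$ for its (a priori unknown) interface trace. Since $\hat\sigma' = \hat\sigma$ on $X_1$, the restriction $U|_{X_1}$ solves $d\hat\sigma \meh dU = 0$ with datum $g$ and hence produces the interior Neumann datum $\Lambda_{\hat\sigma} g$ on $\partial X_1$. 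On $A$, linearity and uniqueness of the Dirichlet problem give $U|_A = v_1 + v_2$, where $v_1$ carries data $(g,0)$ and $v_2$ carries data $(0,f)$ on $(\partial X_1, \partial X_2)$; thus the Neumann data produced from the collar are $\Lambda^{11} g + \Lambda^{12} f$ on $\partial X_1$ and $\Lambda^{21} g + \Lambda^{22} f$ on $\partial X_2$.

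The coupling comes from the transmission condition on $\partial X_1$. The current $j = \hat\sigma' \meh dU$ is a globally defined $L^p$ $1$-form on $\overline{X_2}$ with $dj = 0$, and $\hat\sigma'$ is continuous across $\partial X_1$; therefore the Neumann trace of $U$ on $\partial X_1$ (the tangential restriction of $j$, i.e. the flux form) is one and the same whether it is computed from $X_1$ or from $A$. Equating the two expressions yields $\Lambda_{\hat\sigma} g = \Lambda^{11} g + \Lambda^{12} f$, that is $(\Lambda_{\hat\sigma} - \Lambda^{11}) g = \Lambda^{12} f$. Granting the invertibility of $\Lambda_{\hat\sigma} - \Lambda^{11}$, this pins down the interface trace as $g = (\Lambda_{\hat\sigma} - \Lambda^{11})^{-1} \Lambda^{12} f$; inserting it into the Neumann datum on $\partial X_2$ gives $\Lambda_{\hat\sigma'} f = \Lambda^{21} g + \Lambda^{22} f = \Lambda^{22} f + \Lambda^{21}(\Lambda_{\hat\sigma} - \Lambda^{11})^{-1} \Lambda^{12} f$, which is the asserted identity.

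The crux, which I would treat last, is the invertibility of $\Lambda_{\hat\sigma} - \Lambda^{11} : C^1(\partial X_1) \to C^0(\partial X_1)$. Injectivity follows by gluing: if $(\Lambda_{\hat\sigma} - \Lambda^{11}) g = 0$, then the function equal to the $\hat\sigma$-solution $u$ on $X_1$ with datum $g$ and to $v_1$ on $A$ has matching traces and matching Neumann data along $\partial X_1$, hence defines a global $W^{1,p}$-solution $w$ of $d \hat\sigma' \meh dw = 0$ on $X_2$ with $w|_{\partial X_2} = 0$; by uniqueness of the Dirichlet problem $w \equiv 0$, so $g = 0$. For surjectivity I would use that $\Lambda_{\hat\sigma}$ and $\Lambda^{11}$ are self-adjoint elliptic pseudodifferential operators of order $1$ on $\partial X_1$ whose principal symbols, arising from the two opposite sides of the interface, have opposite sign; their difference is thus again elliptic of order $1$ and self-adjoint, hence Fredholm of index zero, so that injectivity forces bijectivity, the stated one-derivative loss following from elliptic regularity. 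Concretely, Green's formula identifies $\langle (\Lambda_{\hat\sigma} - \Lambda^{11}) g, g \rangle$ with $\pm \big( \int_{X_1} du \wedge \hat\sigma \meh du + \int_A dv_1 \wedge \hat\sigma' \meh dv_1 \big)$, a sign-definite quadratic form by \eqref{sigma2} that vanishes only for $g = 0$, confirming the index computation. The obstacles I anticipate are the careful bookkeeping of orientations in the flux forms entering the transmission relation (so that the two one-sided Neumann data are genuinely equal, producing the minus sign in $\Lambda_{\hat\sigma} - \Lambda^{11}$) and the rigorous justification of the ellipticity and definiteness of this boundary operator.
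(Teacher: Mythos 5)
Your derivation of the identity itself is the same as the paper's (which simply says it ``follows from the definition of the operators'', pointing to \cite[Sec. 6]{Nachman}): you solve the global Dirichlet problem on $X_2$, split the collar solution by linearity, and use the continuity of the current $j=\hat\sigma'\meh dU$ across $\partial X_1$ to get the transmission relation $(\Lambda_{\hat\sigma}-\Lambda^{11})g=\Lambda^{12}f$, from which the Schur-complement formula follows; your remark that the Neumann data are restrictions of one and the same $1$-form, so no normal-vector sign enters, is exactly the right way to handle the orientation worry you raise. Where you genuinely diverge is the invertibility of $\Lambda_{\hat\sigma}-\Lambda^{11}$. The paper does not argue injectivity plus index zero: it exhibits the inverse explicitly as the single-layer operator $Sf(x)=\int_{\partial X_1}G(x,y)f(y)\,dy$ built from the Green function of the Dirichlet problem on $X_2$ (the jump of the conormal derivative of a single-layer potential across $\partial X_1$ recovers the density, while $S f$ is continuous across $\partial X_1$ and vanishes on $\partial X_2$, so $(\Lambda_{\hat\sigma}-\Lambda^{11})Sf=f$). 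Your route --- injectivity by gluing the two one-sided solutions into a global solution vanishing on $\partial X_2$, positive definiteness of the quadratic form $\int_{X_1}du\wedge\hat\sigma\meh du+\int_A dv_1\wedge\hat\sigma'\meh dv_1$, and surjectivity from the ellipticity and self-adjointness of the difference of the two opposite-sided Dirichlet-to-Neumann operators --- is sound, and in fact the coercivity you compute already gives an isomorphism $H^{1/2}(\partial X_1)\to H^{-1/2}(\partial X_1)$ by Lax--Milgram without invoking the pseudodifferential calculus (which is slightly delicate for $C^3$ coefficients and for the stated $C^1\to C^0$ mapping). What the paper's construction buys, and yours does not, is an \emph{explicit} inverse, which is in the spirit of the paper's reconstruction-by-explicit-procedure program; what yours buys is independence from Green-function machinery and a transparent reason why the operator is boundedly invertible.
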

The proof of this formula follows from the definition of the operators. The fact that $\Lambda_{\hat \sigma} - \Lambda^{11}$ is invertible comes from an explicit construction of its inverse, which turns out to be the single-layer operator on $\partial X_1$ corresponding to the Green function $G$ for the Dirichlet problem on $X_2$. More explicitly, it is the operator
$$S f (x) = \int_{\partial X_1} G(x,y) f(y) dy,$$
where $G$ satisfies $d \hat \sigma' \meh d G= -\delta(x-y)$ in $X_2$ and $G(\cdot,0)|_{\partial X_2}=0$. 

\section{The Beltrami Equation} \label{secBeltrami}
In this section we will study the equation

\begin{equation} \label{beltrami}
\overline \partial w = \mu \partial w,
\end{equation}
called the Beltrami equation, on a Riemann surface. Here $\mu$ is a bounded (-1,1)-form, namely a Beltrami differential, whose definition we will recall.
\begin{defn}
A Beltrami differential $\mu(z) \frac{\overline{dz}}{dz}$ on a Riemann surface $V$, equipped with an atlas $\{ U_{\alpha} , z_{\alpha} \}$, is a collection of $L^{\infty}$ complex-valued function $\mu^{\alpha}$ defined on $z_{\alpha} (U_{\alpha})$ such that
\begin{equation}
\mu^{\alpha} (z_\alpha) = \mu^{\beta} (z_{\beta}) \frac{\overline{ \left( \frac{dz_{\beta}}{dz_{\alpha}} \right)}  }{\frac{dz_{\beta}}{dz_{\alpha}}}
\end{equation}
and $\|\mu \|_{\infty} = \sup_{\alpha} \| \mu^{\alpha} \|_{\infty} < 1$.
\end{defn}
With this definition, equation (\ref{beltrami}) is valid globally.

The main result of this section is:
\begin{theo} \label{teobeltrami}
Let $X \subset V$ be an open subset of an affine Riemann surface $V$, let $\tilde V \supset V$ be its compactification, as in section \ref{prelim}, and let $\mu$ be a Beltrami differential on $\tilde V$ with $\mathrm{supp}(\mu) \subset X$ and $\| \mu \|_{\infty} \leq k < 1$. Then, for $j=1,2$, there is a unique solution $w_j(z)$ of equation (\ref{beltrami}) on $V$ such that $w_j(z)= z_j + w_{0j} (z)$, $w_{0j} \in \tilde W^{1,p}(\tilde V)$ for $p > 2$ and $w_{0j} (\beta_1) = 0$.
\end{theo}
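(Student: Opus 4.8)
The plan is to reduce equation \eqref{beltrami} for the normalized ansatz $w_j=z_j+w_{0j}$ to an integral equation on $\tilde V$ that is solved by a Neumann series built from the operator $R$ and the Hodge--Riemann decomposition of Section \ref{prelim}. Since $z_j$ is holomorphic on $V$, one has $\overline\partial z_j=0$ there, so $w_j$ solves \eqref{beltrami} on $V$ precisely when
\[
\overline\partial w_{0j}-\mu\,\partial w_{0j}=\mu\,\partial z_j\qquad\text{on }V .
\]
As $\mathrm{supp}(\mu)\subset X$ is relatively compact in $V$ and $\partial z_j$ is smooth there, the datum $g_j:=\mu\,\partial z_j$ is a bounded $(0,1)$-form supported in $X$, hence $g_j\in L^p_{0,1}(\tilde V)$ for all $p$.

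I would first set $w_{0j}=R\Phi$ for an unknown $(0,1)$-form $\Phi\in L^p_{0,1}(\tilde V)$. Using $\overline\partial R=\mathrm{Id}-\harm$ together with $\partial w_{0j}=\partial R\Phi$, the equation becomes the fixed-point relation
\[
\Phi=g_j+\mu\,\partial R\,\Phi ,
\]
that is $(\mathrm{Id}-\mu\,\partial R)\Phi=g_j$. The operator $\mu\,\partial R$ sends $L^p_{0,1}(\tilde V)$ into itself, and by the estimates underlying Lemma \ref{lemest} --- the surface analogue of the classical fact that the Beurling transform has $L^p\!\to\!L^p$ norm tending to $1$ as $p\to2$ --- one has $\|\mu\,\partial R\|_{p}\le C_p\,k$ with $C_p\to1$ as $p\to2^+$. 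Choosing $p>2$ close enough to $2$ makes $C_pk<1$, so $\mathrm{Id}-\mu\,\partial R$ is invertible and $\Phi=\sum_{n\ge0}(\mu\,\partial R)^n g_j$ converges in $L^p_{0,1}(\tilde V)$; the mapping property $R:L^p_{0,1}(\tilde V)\to\tilde W^{1,p}(\tilde V)$ then gives $w_{0j}=R\Phi\in\tilde W^{1,p}(\tilde V)$, while $w_{0j}(\beta_1)=0$ is built into the definition of $R$. A standard bootstrap raises $p$ to the full range $2<p<\infty$.

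The step I expect to be the main obstacle is to ensure that the object just produced solves \eqref{beltrami} \emph{exactly}, and not merely modulo harmonic forms. Indeed, $w_{0j}=R\Phi$ only yields $\overline\partial w_{0j}=\Phi-\harm\Phi$, so a priori $\overline\partial w_{0j}-\mu\,\partial w_{0j}=g_j-\harm\Phi$, and one must show $\harm\Phi=0$. This is precisely the new phenomenon for $\mathrm{genus}(\tilde V)\ge1$; for genus $0$ the space $H_{0,1}(\tilde V)$ is trivial and there is nothing to check. The key lever is that $w_{0j}$ is a genuine (single-valued) function on the compact surface $\tilde V$: for any holomorphic $1$-form $\omega$ of the deformed complex structure one has $(\overline\partial w_{0j}-\mu\,\partial w_{0j})\wedge\omega=d(w_{0j}\,\omega)$, whence $\int_{\tilde V}(\overline\partial w_{0j}-\mu\,\partial w_{0j})\wedge\omega=0$ by Stokes. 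Combining this with the vanishing $\int_{\tilde V}g_j\wedge\omega=0$ --- which one checks from $\overline\partial z_j=0$, the relation satisfied by $\omega$, and the residue theorem at the punctures $\beta_1,\dots,\beta_d$ (here the asymptotic normalizations ii) and iv) of the embedding enter) --- gives $\int_{\tilde V}\harm\Phi\wedge\omega=0$ for every such $\omega$. Since $\harm\Phi$ is antiholomorphic and the space of antiholomorphic $(0,1)$-forms and the space of deformed-holomorphic $1$-forms are transverse and isotropic for the wedge pairing on $H^1(\tilde V;\compl)$, this pairing is nondegenerate and forces $\harm\Phi=0$. Hence $w_j=z_j+w_{0j}$ solves \eqref{beltrami}.

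Finally, uniqueness is clean: if $w_j,w_j'$ are two solutions with the stated normalization, then $h:=w_{0j}-w_{0j}'\in\tilde W^{1,p}(\tilde V)$ satisfies $\overline\partial h=\mu\,\partial h$ on $V$ and $h(\beta_1)=0$. Since $\mu$ vanishes near the punctures, $h$ is holomorphic and bounded there and thus extends across them, so $h$ is holomorphic for the deformed structure on the whole compact surface $\tilde V$; being a global holomorphic function on a connected compact Riemann surface it is constant, and $h(\beta_1)=0$ gives $h\equiv0$.
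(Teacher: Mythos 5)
Your existence and uniqueness machinery coincides with the paper's: the ansatz $w_j=z_j+R\Phi$, the identities $\overline\partial R=\mathrm{Id}-\harm$ and $\partial R=\Pi$, the fixed-point equation $\Phi=\mu\,dz_j+\mu\Pi\Phi$ solved by a Neumann series under $kC_p<1$ from Lemma \ref{lemest} -- these are exactly the paper's steps, and your uniqueness argument (a bounded solution of the homogeneous equation is holomorphic for the deformed structure on all of $\tilde V$, hence constant, hence zero by the normalization at $\beta_1$) is an equivalent, somewhat more conceptual version of the paper's contraction argument for $\partial(w-w')$.

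The difficulty lies in the step you yourself single out. You are right that the construction only yields $\overline\partial w_{0j}-\mu\,\partial w_{0j}=g_j-\harm\Phi$, so that $\harm\Phi=0$ must be proved, and your Stokes identity $\int_{\tilde V}(\overline\partial w_{0j}-\mu\,\partial w_{0j})\wedge\omega=0$ is correct. But the companion claim $\int_{\tilde V}g_j\wedge\omega=0$ is not justified, and it carries the entire weight of the genus $\geq 1$ case. The form $z_j\omega$ is an original-meromorphic function times a deformed-holomorphic form; it is meromorphic for no single complex structure on $X$, so no residue theorem applies to it. What Stokes applied to $z_j\omega$ on $\tilde V$ minus small disks around the punctures actually gives is $\int_{\tilde V}\mu\,dz_j\wedge\alpha=2\pi i\sum_l\mathrm{Res}_{\beta_l}(z_j\omega)$ (writing $\omega=\alpha+\mu\alpha$), and the vanishing of the right-hand side for all deformed-holomorphic $\omega$ is precisely the Mittag-Leffler/Serre-duality obstruction to realizing the principal parts of $z_j$ by a function meromorphic for the \emph{deformed} structure --- that is, it is equivalent to the existence statement you are trying to prove, not a consequence of the residue theorem (the asymptotic normalizations ii) and iv) control the pole orders at the $\beta_l$ but do not make this sum vanish). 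Your argument is therefore circular at exactly this point. For comparison, the paper's own proof passes over the same point by asserting that the Neumann series converges in $L^p_{0,1}(\tilde V)\cap\ker(\harm)$, without arguing that its terms or its sum are annihilated by $\harm$; so you have isolated the right difficulty, but the appeal to the residue theorem does not close it, and the nondegeneracy of the pairing between $H_{0,1}(\tilde V)$ and the deformed-holomorphic forms (which you also use) would itself require proof.
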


In order to prove this theorem we introduce the operator $\Pi = \partial R$, initially defined on smooth forms, and we show some estimates which slightly generalize a result by Calder\'on and Zygmund; these will yield in particular that $\Pi : L^p_{0,1}(\tilde V) \to L^p_{1,0}(\tilde V)$, for $2 \leq p < \infty$.

We recall (see section \ref{secembedding} for further explanations) that the norm $\| \; \|_p$ stand for the affine norm $\| \; \|_{L^p_{0,1}(V)}$ (or $\| \; \|_{L^p_{1,0}(V)}$). 

\begin{lem} \label{lemest}
For $f \in L^2_{0,1}(\tilde V) \cap \ker(\harm)$ we have
\begin{equation}\label{est1}
\| \Pi f\|_2 = \|f\|_2
\end{equation}
and, for $f \in L^p_{0,1}(\tilde V) \cap \ker(\harm)$, $p >2$
\begin{equation}\label{est2}
\| \Pi f\|_p \leq C_p \|f\|_p ,  \; \mathrm{and}  \lim_{p \to 2^+} C_p =1.
\end{equation}
\end{lem}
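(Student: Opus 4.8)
The plan is to reduce $\Pi$ to a globally defined singular integral of Beurling type, to anchor the argument on an exact $L^2$ identity, and then to obtain the limiting behavior of the constant by interpolation rather than by computing it sharply.

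First I would exploit that a Riemann surface carries no $(0,2)$-forms, so $\overline\partial f = 0$ for every $(0,1)$-form $f$; consequently the Hodge--Riemann decomposition collapses to $f = \overline\partial(\overline\partial^{\ast} G f) + \harm f$. On $\ker(\harm)$ this reads $f = \overline\partial \phi$ with $\phi := \overline\partial^{\ast} G f$ a function, and since $\partial$ kills the additive constant in the definition of $R$, we get $\Pi f = \partial R f = \partial \phi$. Thus $\Pi$ is the operator $\overline\partial\phi \mapsto \partial\phi$: in an isothermal chart, writing $f = g\,d\overline z$, one has $\phi_{\overline z} = g$ and $\Pi f = (\mathcal B g)\,dz$, where $\mathcal B$ is the classical Beurling transform. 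This pins down both the local model and the nature of the singularity, namely $|z-w|^{-2}$ on the diagonal.

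Next, for \eqref{est1}, working on the closed surface $\tilde V$ I would write, for smooth $\phi$,
$$\|\Pi f\|_2^2 - \|f\|_2^2 = \|\partial\phi\|_2^2 - \|\overline\partial\phi\|_2^2 = \langle (\partial^{\ast}\partial - \overline\partial^{\ast}\overline\partial)\phi, \phi\rangle.$$
Since a Riemann surface is Kähler, the two Laplacians coincide on functions, $\partial^{\ast}\partial = \overline\partial^{\ast}\overline\partial$, so the right-hand side vanishes and $\|\Pi f\|_2 = \|f\|_2$; equivalently one integrates $d(\overline\phi\,\partial\phi)$ over $\tilde V$ and invokes Stokes, the $(2,0)$-part dropping out on a curve. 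Density of smooth forms in $L^2_{0,1}(\tilde V)\cap\ker(\harm)$ then extends the identity to all such $f$. Recall that $\|\cdot\|_2$ computed on $V$ and on $\tilde V$ agree, so this is exactly \eqref{est1}.

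For \eqref{est2} I would verify that $\Pi = \partial\overline\partial^{\ast} G$, realized on $\tilde V$, is a Calderón--Zygmund operator: its Schwartz kernel matches the Beurling kernel near the diagonal in each local isothermal coordinate and is smooth off the diagonal, hence satisfies the standard size and Hörmander cancellation conditions uniformly across a finite chart cover of $\tilde V$, including neighborhoods of the points $\beta_l$ at infinity. Combined with $\|\Pi\|_{2\to 2}=1$, the Calderón--Zygmund theorem gives $\|\Pi\|_{p\to p}\le C_p < \infty$ for all $1<p<\infty$ --- this is the announced slight generalization of the planar statement. To get $\lim_{p\to 2^+}C_p = 1$ I would not chase a sharp constant but interpolate: fixing any $p_0>2$ with $\|\Pi\|_{p_0\to p_0}\le C_{p_0}$ and using $\|\Pi\|_{2\to 2}=1$, the Riesz--Thorin theorem yields $\|\Pi\|_{p\to p}\le C_{p_0}^{\theta}$ for $\tfrac1p = \tfrac{1-\theta}{2} + \tfrac{\theta}{p_0}$, and $\theta\to 0$ as $p\to 2^+$; the finite-rank projection $\harm$ is bounded on every $L^p$ and does not affect this limit. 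Finally the inclusion $\|\cdot\|_{L^p_{0,1}(V)}\le \|\cdot\|_{L^p_{0,1}(\tilde V)}$ from Section \ref{secembedding} lets the estimate be read in the norm $\|\cdot\|_p$.

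The main obstacle is the first half of the previous paragraph: showing that $\Pi$ is a genuine Calderón--Zygmund operator on the compact curve $\tilde V$, i.e. producing kernel bounds for $\partial\overline\partial^{\ast} G$ that are uniform across charts and controlled near the asymptotic directions at infinity, where the local isothermal model degenerates and the embedding data $\gamma_l,\tilde\gamma_l$ of Section \ref{prelim} enter. Once these kernel estimates are secured, the passage to $L^p$ is a citation of Calderón--Zygmund and the limit $C_p\to 1$ is a one-line interpolation anchored at the $L^2$ isometry established above.
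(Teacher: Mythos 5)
Your proposal is correct and follows essentially the same route as the paper: the $L^2$ identity via Stokes/integration by parts (your K\"ahler-identity phrasing is the same computation as the paper's chain of wedge-product integrals for $\|\partial Rf\|_2^2=\|f\|_2^2$), and the $L^p$ bound via a near-diagonal Beurling-type kernel $\frac{d\zeta\wedge dz}{2\pi i(\zeta-z)^2}$ plus a bounded off-diagonal remainder, with $C_p\to 1$ obtained by Riesz--Thorin interpolation anchored at the $L^2$ isometry. The kernel verification you flag as the main obstacle is exactly the parametrix decomposition the paper asserts (and does not detail either), so there is no gap relative to the paper's argument.
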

\begin{proof}
The proof is given for $f \in C^2_{0,1}(\tilde V) \cap \ker(\harm)$; the original statement will follow by a density argument. We have the following chain of equalities, where by Stokes' theorem and the Hodge decomposition on $\tilde V$
\begin{align*}
\|\Pi f\|_2^2 &= \int_{V} \partial R f \wedge \overline{\partial Rf} = - \int_{V} R f \wedge \partial \overline{\partial Rf} = - \int_{V} R f \wedge \partial \overline R \overline{\partial f} \\
&= - \int_{V} R f \wedge  \overline{\partial f}= \int_{V} \overline \partial R f \wedge \overline f = \int_{V} f \wedge \overline{f} = \|f\|_2^2.
\end{align*}

To prove (\ref{est2}) we first decompose the operator $\Pi$ in the following way
\begin{equation} \label{decompo}
\Pi f = \Pi_1 f + \Pi_2 f  = \int_{|\zeta -z| \leq \delta} f(\zeta) \Pi_1(\zeta, z) + \int_{|\zeta - z| > \delta} f(\zeta) \Pi_2 (\zeta, z),
\end{equation}
for $\delta$ sufficiently small, where in affine coordinate form
\begin{equation}
\Pi_1(\zeta, z) = \frac{d \zeta \wedge dz}{2 \pi i(\zeta -z)^2} (1+ \varepsilon(\delta)), \; \varepsilon(\delta) \to 0 \; \textrm{when} \; \delta \to 0
\end{equation}
and $\Pi_2$ is bounded. Decomposition \eqref{decompo} gives a so-called \textit{parametrix} for the operator $\Pi$.
From the Calder\'on-Zygmund result for the operator $$F \mapsto \lim_{\varepsilon \to 0} \frac{1}{2\pi i} \int_{|\zeta - z| > \varepsilon} \frac{F(z)}{(z-\zeta)^2} dz d \overline z$$ (see \cite[p.106]{Ahlfors}) where $f = F d \overline z$ , we have the estimate
$\| \Pi_1 f \|_p \leq (1 + \varepsilon(\delta)) \tilde C_p \|f \|_p$. In addition, we also have $\| \Pi_2 f \|_p \leq \| \Pi_2 \|_{L^{\infty}(|\zeta - z| > \delta)} \|f \|_p = K(\delta) \|f \|_p.$ Putting it all together we find that
\begin{equation}
\| \Pi f\|_p \leq ( (1 + \varepsilon(\delta) ) \tilde C_p + K(\delta)) \|f\|_p = C_p \|f\|_p.
\end{equation}
The fact that $C_p \to 1$ when $p \to 2$ is a consequence of the Riesz-Thorin interpolation theorem (see \cite[Thm. 1.1.1, p.2]{Bergh}) and of (\ref{est1}).
\end{proof}

Now, using the last lemma, we fix $p > 2$ such that $kC_p < 1$. The proof of the theorem will be given for the case $j=1$; the other case is completely analogous.
\begin{proof}[Proof of Theorem \ref{teobeltrami}.]
Let us begin with the existence statement. We look for solutions of the form $w(z) = z_1 + Rf$.
Thus
\begin{align*}
\overline \partial w &= \overline \partial R f= f - \harm f, \\
\partial w &= dz_1 + \partial Rf = dz_1 + \Pi f = dz_1 + \Pi (f -\harm f),
\end{align*}
for $R \harm f = 0$ (and so $\Pi \harm f = 0$). If we impose equation \eqref{beltrami}, we obtain an integral equation for $f_0 = f- \harm f$:
\begin{equation}\label{equa1}
f_0-\mu \Pi f_0 = \mu dz_1.
\end{equation}
Under our assumptions, the linear operator $f \mapsto \mu \Pi f$ is a contraction in $L^p_{0,1} (\tilde V) \cap \ker (\harm)$ (its norm is $ \leq k C_p <1$), so the series
$$f_0 = \mu dz_1 + \mu \Pi \mu dz_1 + \mu \Pi \mu \Pi \mu dz_1 + \ldots$$
converges in $L^p_{0,1}(\tilde V) \cap \ker (\harm)$ to a solution of (\ref{equa1}). Then we define $w_{01} = R f_0$ which satisfies $\partial w_{01} = \Pi f_0 \in L^p_{1,0}(\tilde V)$ and $w_{01} \in L^{\infty}(\tilde V)$ (the latter follows from properties of $R$). Thus the function $w(z) = z_1 + w_{01}(z)$ is a solution of (\ref{beltrami}).

To show uniqueness, we first remark that $w_{01} = R \overline \partial w_{01}$. This follows from the fact that $\overline \partial w_{01} = \overline \partial w = \mu \partial w = \mu (dz_1 + \partial w_{01}) \in L^p_{0,1}(\tilde V)$ because the support of $\mu$ is contained in $X$; we can thus calculate $R\overline \partial w_{01}$ and see that $\overline \partial (w_{01} - R\overline \partial w_{01}) = 0$. Now $w_{01} - R \overline \partial w_{01}$ is a bounded holomorphic function which goes to zero for $z \to \beta_1$, so it vanishes. In particular, this yields $w = z_1 + R \overline \partial w$.

Now, if $w'= z_1 + w'_{01} = z_1 + R \overline \partial w'$ is another solution, we obtain
$$\partial (w - w') = \Pi \mu (\partial (w - w')),$$
which gives $\partial (w -w')=0$ thanks to our estimates, and also $\overline \partial (w -w') =0$ because of the Beltrami equation. So $w-w'$ must be constant, and in fact it vanishes because of our normalisation.
\end{proof}

\subsection{Properties of the solution} \label{normalization}

We now consider the application $F: V \to \compl^2$ defined as $F(z)=(w_1(z), w_2(z))$ where $w_1, \; w_2$ are the solutions of the Beltrami equation given by Theorem \ref{teobeltrami}. In particular, we want to understand the image surface $W = F(V)$.

By \cite[Thm. 2, p.97]{Ahlfors} one has that $F$ is a local homeomorphism; besides, since $w_1$ and $w_2$ are solutions of the Beltrami equation, $W$ has a holomorphic atlas. Thus, by classical results, it is an algebraic curve as well, but possibly with intersection points. Let us note that, by the properties of $F$, we have $W \cap \proj^1_{\infty} = V \cap \proj^1_{\infty}$.

\subsection{Applications to anisotropic conductivities}

The most important consequence of Theorem \ref{teobeltrami}, for this paper, is the following proposition about the existence of global isothermal coordinates which transforms an anisotropic conductivity into an isotropic one.

\begin{prop} \label{propBelAn}
Let $X \subset V$ be an open subset of an affine Riemann surface $V$, let $\tilde V \supset V$ be its compactification, as in section \ref{prelim}, and $\hat \sigma$ a $C^k$-anisotropic conductivity on $V$ ($k \geq 1$), represented by the identity matrix on $V \setminus X$. Then there exists a unique affine algebraic curve $W$, and a unique $C^k$ immersion $F: V \to W$, $F=(w_1,w_2)$ such that $F_{\ast} \hat \sigma = \sigma$ is isotropic on $W$ (where $F^{-1}$ exists) and $w_j(z)= z_j + w_{0j}(z)$ with $w_{0j} \in \tilde W^{1,p}(\tilde V)$ and $w_{0j}(\beta_1) = 0$, for $j=1,2$.
\end{prop}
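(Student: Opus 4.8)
The plan is to reduce the whole construction to a single application of Theorem \ref{teobeltrami}, after encoding the anisotropy of $\hat\sigma$ in a Beltrami differential. First I would attach to $\hat\sigma$ the differential $\mu := \overline{\sigma^1}/\sigma^0$, where $\sigma^0,\sigma^1$ are the complex coefficients introduced in the Complex viewpoint subsection. From the transformation rule \eqref{transf} one checks that $\mu$ obeys exactly the Beltrami transformation law of the Definition: $\sigma^0$ is a scalar, and the conjugate of the cocycle factor governing $\sigma^1$ is precisely the factor $\overline{(dz_\beta/dz_\alpha)}/(dz_\beta/dz_\alpha)$ required of a Beltrami differential. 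Positive-definiteness \eqref{sigma1}--\eqref{sigma2} forces the Hermitian matrix $\left(\begin{smallmatrix}\sigma^0 & \sigma^1\\ \overline{\sigma^1} & \sigma^0\end{smallmatrix}\right)$ to have positive determinant, so $|\sigma^1|<\sigma^0$ and hence $|\mu|<1$ pointwise; since $\hat\sigma\equiv I$ on $V\setminus X$ we have $\sigma^1\equiv 0$ there, i.e. $\mathrm{supp}(\mu)\subset X$, and continuity on the relatively compact set $\overline X$ gives $\|\mu\|_\infty\le k<1$. Extending $\mu$ by zero to $\tilde V$, the hypotheses of Theorem \ref{teobeltrami} are met.

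I would then set $F=(w_1,w_2)$ with $w_1,w_2$ the two normalized solutions furnished by Theorem \ref{teobeltrami}, and $W=F(V)$. Since $\hat\sigma$ is $C^k$, so is $\mu$, and Vekua-type elliptic regularity for the Beltrami equation (together with the estimates behind Lemma \ref{lemest}) upgrades each $w_j$ to a $C^k$ map. The local-homeomorphism property of quasiconformal solutions (\cite[Thm.~2, p.~97]{Ahlfors}, already invoked in Section \ref{normalization}), applied to $w_1$, makes $F$ locally injective, hence a $C^k$ immersion, and $W$ is an affine algebraic curve carrying the holomorphic atlas described in Section \ref{normalization}, possibly with self-intersection points away from which $F^{-1}$ exists.

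The heart of the argument is to verify that $\sigma:=F_\ast\hat\sigma$ is isotropic on $W$. Here I would use the local push-forward formula $F_\ast\hat\sigma=\big({}^t(DF)\,\hat\sigma\,(DF)/|\det DF|\big)\circ F^{-1}$ and compute the complex coefficient $\sigma^1_W$ of the image conductivity. Because $w_1$ and $w_2$ solve the same Beltrami equation $\overline\partial w_j=\mu\,\partial w_j$ with the very coefficient $\mu$ attached to $\hat\sigma$, the map $F$ is holomorphic for the complex structure that $\hat\sigma$ defines on $V$ and sends it to the standard structure of $\compl^2$; equivalently, the transformation law of Beltrami coefficients under $F$ yields $\sigma^1_W=0$. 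This is the surface analogue of Sylvester's planar computation \cite{Sylvester}, and isotropy follows, with $\sigma=\sigma_0 I$ where $\sigma_0$ is the conformally invariant scalar $\sqrt{\det\hat\sigma}=\sqrt{(\sigma^0)^2-|\sigma^1|^2}$ transported to $W$.

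Finally, uniqueness of $W$ and $F$ is inherited directly from the uniqueness part of Theorem \ref{teobeltrami}: the normalization $w_{0j}\in\tilde W^{1,p}(\tilde V)$ with $w_{0j}(\beta_1)=0$ pins down $w_1,w_2$, hence $F$ and $W=F(V)$. I expect the main obstacle to be the isotropy step: one must transfer the planar identity $\sigma^1_W=0$ to the curve in a coordinate-free way and simultaneously control the behaviour of $F$ near the points $\beta_l$ at infinity, ensuring that $W$ is a genuine affine algebraic curve with the same points at infinity as $V$ (as recorded in Section \ref{normalization}), rather than a surface with uncontrolled ends.
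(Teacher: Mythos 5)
Your overall strategy is exactly the paper's: attach a Beltrami differential to $\hat\sigma$, apply Theorem \ref{teobeltrami} to obtain the two normalized solutions, set $F=(w_1,w_2)$ and $W=F(V)$, invoke Sylvester's local computation for isotropy together with the remarks of Section \ref{normalization} for the immersion property, and read off uniqueness from the normalization in Theorem \ref{teobeltrami}. The one genuine error is your choice of Beltrami coefficient. You take $\mu=\overline{\sigma^1}/\sigma^0$, whereas the correct coefficient (Lemma \ref{lemBelAn}) is
\begin{equation*}
\mu_{\hat\sigma}\;=\;\frac{-\,\overline{\sigma^1}}{\sigma^0+\sqrt{(\sigma^0)^2-|\sigma^1|^2}}\;=\;\frac{\sigma^{22}-\sigma^{11}-2i\sigma^{12}}{\sigma^{11}+\sigma^{22}+2\sqrt{\det\hat\sigma}},
\end{equation*}
which differs from yours both in sign and in modulus. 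This is not a normalization you can absorb later: the pointwise requirement that ${}^t(DF)\,\hat\sigma\,(DF)$ be a scalar matrix determines the complex dilatation of $DF$ uniquely (write $\hat\sigma={}^tBB$; the admissible differentials are exactly $c\,OB^{-1}$ with $O$ a rotation and $c>0$, and all of these share a single dilatation), so precisely one Beltrami coefficient isotropizes $\hat\sigma$, and it is $\mu_{\hat\sigma}$. Concretely, for $\hat\sigma=\mathrm{diag}(\sigma^{11},\sigma^{22})$ with $\sigma^{11}>\sigma^{22}$ the isotropizing dilatation is $(\sqrt{\sigma^{22}}-\sqrt{\sigma^{11}})/(\sqrt{\sigma^{22}}+\sqrt{\sigma^{11}})<0$, which is the paper's formula, while your $\mu=(\sigma^{11}-\sigma^{22})/(\sigma^{11}+\sigma^{22})>0$ stretches in the wrong direction and makes the push-forward strictly more anisotropic.

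Relatedly, your assertion that ``$F$ is holomorphic for the complex structure that $\hat\sigma$ defines on $V$'' begs the question: the complex structure canonically attached to $\hat\sigma$ is by definition the one whose dilatation isotropizes it, and identifying that dilatation is exactly the content of Lemma \ref{lemBelAn}; it cannot be read off from the Hermitian matrix of $(\sigma^0,\sigma^1)$ without the $\sqrt{\det\hat\sigma}$ term (compare the metric formula $\mu=(\tfrac12(E-G)+iF)/(\tfrac12(E+G)+\sqrt{EG-F^2})$ used in the proof of Theorem \ref{maintheo}). Once $\mu$ is replaced by $\mu_{\hat\sigma}$ --- which still has support in $X$, still satisfies the Beltrami transformation law (it differs from your expression by a globally defined negative real scalar), and has $|\mu_{\hat\sigma}|^2=(\sigma^{11}+\sigma^{22}-2\sqrt{\det\hat\sigma})/(\sigma^{11}+\sigma^{22}+2\sqrt{\det\hat\sigma})\le k<1$ --- the remainder of your argument goes through as in the paper, with $\sigma=(\det\hat\sigma)^{1/2}\circ F^{-1}$.
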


We will need the following lemma:

\begin{lem} \label{lemBelAn}
Let $X \subset V \subset \tilde V$ as in proposition \ref{propBelAn}. Then every conductivity $\sigma$ on $X$, extended on $V \setminus X$ by the identity matrix, determines a Beltrami differential $\mu_{\sigma} \frac{\overline{dz}}{dz}$ with support contained in $X$ given locally by
\begin{equation}
\mu_{\sigma}^{\alpha}=\mu_{\sigma}|_{U_{\alpha}} = \frac{\sigma_{\alpha}^{22} - \sigma_{\alpha}^{11} - 2i \sigma_{\alpha}^{12}}{\sigma_{\alpha}^{11} + \sigma_{\alpha}^{22} +2 \sqrt{\det(\sigma_{\alpha})}} = \frac{- \overline \sigma_{\alpha}^1}{\sigma_{\alpha}^0 + \sqrt{(\sigma_{\alpha}^0)^2 - | \sigma_{\alpha}^1|^2}}.
\end{equation}
\end{lem}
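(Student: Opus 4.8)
The plan is to verify that the locally defined functions $\mu_\sigma^\alpha$ given by the stated formula genuinely assemble into a global Beltrami differential of norm strictly less than one, supported in $X$. There are four things to check: the two displayed expressions agree, the transformation rule of a Beltrami differential holds, the sup-norm is bounded away from $1$, and the support is contained in $X$.

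First I would establish the algebraic identity between the two formulas. Using the definitions $\sigma^0 = (\sigma^{22}+\sigma^{11})/2$ and $\sigma^1 = (\sigma^{11}-\sigma^{22})/2 - i\sigma^{12}$, a direct computation gives $-\overline{\sigma^1} = (\sigma^{22}-\sigma^{11}-2i\sigma^{12})/2$ and, crucially, $(\sigma^0)^2 - |\sigma^1|^2 = \sigma^{11}\sigma^{22} - (\sigma^{12})^2 = \det(\sigma)$. Substituting these into the right-hand expression and clearing the common factor of $2$ yields the left-hand expression; this step is routine. In particular $(\sigma^0)^2 - |\sigma^1|^2 = \det(\sigma) > 0$ by positive-definiteness, so the square root is real and positive and both expressions are well defined.

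The conceptual heart of the lemma is the transformation rule. Writing $c = dz_\beta/dz_\alpha$, the rules \eqref{transf} state $\sigma_\alpha^0 = \sigma_\beta^0$ and $\sigma_\alpha^1 = \sigma_\beta^1\, c\,\overline{c}^{-1}$. Since $|c\,\overline{c}^{-1}| = 1$, the modulus $|\sigma^1|$ is invariant, hence so is the whole denominator $\sigma^0 + \sqrt{(\sigma^0)^2 - |\sigma^1|^2}$. For the numerator one has $-\overline{\sigma_\alpha^1} = -\overline{\sigma_\beta^1}\,\overline{c}\,c^{-1}$, so that $\mu_\sigma^\alpha = \mu_\sigma^\beta\, \overline{c}/c = \mu_\sigma^\beta\, \overline{(dz_\beta/dz_\alpha)}/(dz_\beta/dz_\alpha)$, which is exactly the cocycle condition in the definition of a Beltrami differential. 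Thus $\{\mu_\sigma^\alpha\}$ patches to a globally defined object $\mu_\sigma\,\overline{dz}/dz$.

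Finally I would bound the norm and locate the support. Setting $t = |\sigma^1|/\sigma^0 \in [0,1)$ (the strict inequality being $\det\sigma>0$ again), the modulus simplifies to $|\mu_\sigma| = t/(1+\sqrt{1-t^2})$, which is increasing in $t$ and strictly less than $1$. To upgrade pointwise $<1$ to $\|\mu_\sigma\|_\infty < 1$ one uses that $\hat\sigma$ is uniformly positive-definite ($\sigma > C_\sigma I$) on the compact set $\overline X$, so the anisotropy ratio, and hence $t$, stays bounded away from $1$; outside $X$ the conductivity equals $I$, where $\mu_\sigma = 0$. The same observation gives the support statement: since $\sigma \equiv I$ on $V \setminus X$ (indeed in a neighbourhood of $\partial X$) we have $\sigma^1 = 0$ there, hence $\mu_\sigma = 0$, so $\mathrm{supp}(\mu_\sigma) \subset X$. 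The only genuinely non-formal point is the uniform bound $\|\mu_\sigma\|_\infty < 1$, which rests on uniform ellipticity together with compactness of $\overline X$; everything else is a direct verification.
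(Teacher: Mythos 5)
Your proposal is correct and follows essentially the same route as the paper's own (much terser) proof: verify the cocycle condition directly from the transformation rules \eqref{transf}, bound the modulus using $(\sigma^0)^2-|\sigma^1|^2=\det\sigma>0$, and observe that $\sigma\equiv I$ off $X$ forces $\mu_\sigma\equiv 0$ there. Your extra care about where the uniform bound $\|\mu_\sigma\|_\infty\le k<1$ comes from (uniform ellipticity plus compactness of $\overline X$) is a detail the paper leaves implicit but does not change the argument.
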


\begin{proof}
From the transformation rules (\ref{transf}) one immediately has the relation
$$\mu_{\sigma}^{\beta} = \frac{- \overline \sigma_{\alpha}^1}{\sigma_{\alpha}^0 + \sqrt{(\sigma_{\alpha}^0)^2 - | \sigma_{\alpha}^1|^2}} \frac{\overline{ \left( \frac{dz_{\alpha}}{dz_{\beta}} \right)}  }{\frac{dz_{\alpha}}{dz_{\beta}}} = \mu_{\sigma}^{\alpha} \frac{\overline{ \left( \frac{dz_{\alpha}}{dz_{\beta}} \right)}  }{\frac{dz_{\alpha}}{dz_{\beta}}}.$$
In addition, we have that 
\begin{equation} 
|\mu_{\sigma}|^2 = \frac{\sigma^{11} + \sigma^{22} - 2 \sqrt{\det \sigma}}{\sigma^{11} + \sigma^{22} + 2 \sqrt{\det \sigma}} \leq k < 1
\end{equation}
and $\mu_{\sigma} \equiv 0$ outside $X$.
\end{proof}

\begin{pfpropbel}
We define $\mu_{\hat \sigma}$ as in lemma \ref{lemBelAn}; by Theorem \ref{teobeltrami} we can construct $F (z) = (w_1 (z) , w_2 (z))$, $F : \compl^2 \to \compl^2$, where $w_1, w_2$ are the special solutions of the Beltrami equation $\overline \partial w_j = \mu_{\hat \sigma} \partial w_j$. Using \cite[Prop. 1.3]{Sylvester}, we have that $F_{\ast} \hat \sigma = \sigma I$ is isotropic on $F(V) = W \subset \tilde W$ but defined only where $F^{-1}$ is. In particular we have
\begin{equation}
(F_{\ast} \hat \sigma) (w) = (\det \hat \sigma)^{1/2} \circ F^{-1}(w) I = \sigma(w) I,
\end{equation}
where $\sigma(w)= \sqrt{\sigma_0^2(z(w))-|\sigma_1(z(w))|^2}$.

By remarks of section \ref{normalization} we have that $F$ is an immersion: it is a $C^k$ immersion because of smoothness assumptions on $\hat \sigma$. 
%
%
\end{pfpropbel}

\section{Faddeev-type Anisotropic Solutions}

In this section we generalise the results of \cite{HN}, by proving existence and uniqueness of a family of special solutions of the anisotropic conductivity equation, so-called Faddeev-type solutions.

Let us recall from \cite{HN} the definitions of a few operators. We equip $V$ with the Euclidean volume form $dd^c |z|^2$, and let $\varphi \in L^1_{1,1}(V) \cap L^{\infty}_{1,1}(V),\; f \in \tilde W^{1,p}_{1,0}(V) = \{ F \in L_{1,0}^{\infty} (V) : \overline \partial F \in L^p_{1,1}(V) \}$, for $p > 2,\; \lambda \in \compl \setminus \{ 0 \}$ and $\theta \in \compl$. We define
\begin{align*}
\hat R_{\theta} \varphi &= R ((dz_1 + \theta dz_2) \rfloor \varphi) \wedge (dz_1+ \theta dz_2), \\
R_{\lambda, \theta} f &= e_{-\lambda,\theta} \overline{R (\overline{e_{\lambda,\theta}f})}, \; \textrm{where} \; \; e_{\lambda, \theta}(z)=e^{\lambda(z_1+\theta z_2)-\overline \lambda (\overline z_1 + \overline \theta \overline z_2)}.
\end{align*}
Let $\hat \sigma$ be a $C^3$ anisotropic conductivity on $V$ with $\hat \sigma \equiv I$ on $V \setminus X$ and $\hat a_1 , \ldots, \hat a_g \in V \setminus X$ an effective divisor.

\begin{defn}
A function $\hat \psi_{\theta} (z,\lambda)$, with $\theta, \lambda \in \compl, \; z \in V$, is called a Faddeev-type function on $V$ associated with $\hat \sigma, \theta, \lambda$ and $\{ \hat a_1, \ldots, \hat a_g \} \subset V \setminus X$, if
\begin{equation} \label{faddeev}
d \hat \sigma \meh d \hat \psi_{\theta} (z, \lambda) = 2 \left( \sum_{j=1}^g \hat C_{j,\theta}(\lambda) \delta(z, \hat a_j) \right) e^{\lambda(z_1 + \theta z_2)}, \; z \in V,
\end{equation}
and $\hat \psi_{\theta} (z, \lambda) e^{-\lambda(z_1 + \theta z_2)} \to \hat K_l$ (constant), when $z \in V_l$, $z \to \infty$, for $l=1, \ldots, d$ with the normalisation $\hat K_1=1$.
\end{defn}
Let $F: V \to W$ be the mapping constructed in Proposition \ref{propBelAn}, $Y = F(X)$, $a_j = F (\hat a_j)$ for $j=1,\ldots,g$ and $\sigma = F_{\ast} \hat \sigma$ the isotropic conductivity on $W$. Let $\psi_{\theta}(w, \lambda)$ be the Faddeev-type isotropic functions on $W$ constructed in \cite{HN} as the solutions of 
\begin{equation} \label{isofaddeev}
d\sigma d^c \psi_{\theta} (w,\lambda) = 2\left( \sum_{j=1}^g C_{j,\theta}(\lambda) \delta(w, a_j) \right) e^{\lambda(w_1 + \theta w_2)}
\end{equation}
with $\psi_{\theta} e^{-\lambda(w_1+ \theta w_2)} \to K_l$ (constants, with $K_1=1$), when $w \in W_l$, $w \to \infty$, for $l=1, \ldots, d$, where $W_l=F(V_l)$.

We also define 
\begin{equation} \label{delta}
\Delta_{\theta} (\lambda) = \det \left[ \int_{\eta \in W} \hat R_{\theta}(\delta (\eta, a_j)) \wedge \overline \omega_k (\eta) e_{\lambda, \theta}(\eta) \right]_{j,k=1,\ldots, g}
\end{equation}
where $\{ \omega_k \}$ is an orthonormal basis of holomorphic (1,0)-forms on $\tilde W$, and we call $E_{\theta} = \{ \lambda \in \compl : \Delta_{\theta} (\lambda) = 0\}$.

\begin{theo} \label{teofaddeev}
For any generic $\theta$, $\{ \hat a_1, \ldots, \hat a_g \}$ and $\lambda \in \compl \setminus E_{\theta}, |\lambda| \geq const(V, \;\{\hat a_j\},\; \theta, \; \hat \sigma)$ there exists a unique Faddeev-type solution $\hat \psi_{\theta} (z, \lambda)$ associated with $\hat \sigma, \; \theta,\; \lambda$ and $\{\hat a_1, \ldots, \hat a_g \}$. Moreover $E_{\theta}$ is a closed, nowhere dense subset of $\compl$ and we have the equality
\begin{equation} \label{identity}
\hat \psi_{\theta} (z, \lambda) = \psi_{\theta} (F(z), \lambda), \; z \in V
\end{equation}
\end{theo}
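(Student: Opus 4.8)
The plan is to transfer the entire problem to the already-solved isotropic case on $W$ by means of the diffeomorphism $F$ of Proposition \ref{propBelAn}, reading off existence, uniqueness and the identity \eqref{identity} in one stroke. The starting observation is that, since $F_{\ast}\hat\sigma = \sigma$, the push-forward formula recalled in Section \ref{basic} shows that $u$ solves $d\hat\sigma \meh du = 0$ on $V$ if and only if $u\circ F^{-1}$ solves the isotropic equation $d\sigma \meh d^c(u\circ F^{-1}) = 0$ on $W$ (here I use the convention of Section \ref{basic} that for isotropic $\sigma$ the matrix equation $d(\sigma_0 I)\meh d(\cdot)$ is written $d\sigma \meh d^c(\cdot)$); equivalently, $v\mapsto v\circ F$ carries isotropic solutions on $W$ to anisotropic solutions on $V$. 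First I would verify that this correspondence persists for the inhomogeneous Faddeev equations. Setting $\hat\psi_{\theta}:=\psi_{\theta}\circ F$ with $\psi_{\theta}$ the isotropic Faddeev function of \cite{HN}, the homogeneous equation $d\hat\sigma \meh d\hat\psi_{\theta} = 0$ holds on $V\setminus\{\hat a_j\}$ automatically, so only the behaviour at the points $\hat a_j$ and at infinity needs to be analysed.

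At each $\hat a_j\in V\setminus X$ we have $\hat\sigma\equiv I$ and, since $\mathrm{supp}(\mu_{\hat\sigma})\subset X$, the map $F$ is holomorphic there with $w_k(z)=z_k+w_{0k}(z)$, $w_{0k}$ holomorphic near $\hat a_j$. Pulling back the source $2C_{j,\theta}(\lambda)\delta(w,a_j)e^{\lambda(w_1+\theta w_2)}$ of \eqref{isofaddeev} then gives $F^{\ast}\delta(w,a_j)=\delta(z,\hat a_j)$ (as $F(\hat a_j)=a_j$) and $e^{\lambda(w_1+\theta w_2)}\circ F = e^{\lambda(z_1+\theta z_2)}\meh e^{\lambda(w_{01}+\theta w_{02})}$. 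Evaluating the holomorphic factor at $\hat a_j$ shows that $\hat\psi_{\theta}$ satisfies \eqref{faddeev} with the prescribed exponential $e^{\lambda(z_1+\theta z_2)}$ and coefficients $\hat C_{j,\theta}(\lambda)=C_{j,\theta}(\lambda)\meh e^{\lambda(w_{01}(\hat a_j)+\theta w_{02}(\hat a_j))}$. A parallel computation at infinity, using that $w_{0k}$ extends continuously to $\tilde V$ with finite limits $w_{0k}(\beta_l)$, yields $\hat\psi_{\theta}e^{-\lambda(z_1+\theta z_2)}\to \hat K_l:=K_l\meh e^{\lambda(w_{01}(\beta_l)+\theta w_{02}(\beta_l))}$ on $V_l$; the normalisation $w_{0k}(\beta_1)=0$ forces $\hat K_1=K_1=1$. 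This simultaneously establishes existence of a Faddeev-type solution and the identity \eqref{identity}.

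For uniqueness I would run the same correspondence backwards: given any Faddeev-type solution $\hat\psi_{\theta}$ with $\hat K_1=1$, the function $\hat\psi_{\theta}\circ F^{-1}$ is (by the reverse of the computation above) an isotropic Faddeev function on $W$ with $K_1=1$, hence equals $\psi_{\theta}$ by the uniqueness part of the result in \cite{HN}, valid for generic $\theta$, $\{a_j\}$ and $\lambda\in\compl\setminus E_{\theta}$ with $|\lambda|$ large. Since $a_j=F(\hat a_j)$ and $F$ is fixed, genericity of $\{\hat a_j\}$ is equivalent to that of $\{a_j\}$, and the exceptional set $E_{\theta}=\{\lambda:\Delta_{\theta}(\lambda)=0\}$, being defined by the determinant \eqref{delta} purely in terms of the isotropic data on $W$, inherits from \cite{HN} the property of being closed and nowhere dense (the determinant is real-analytic in $\lambda$ and not identically zero for $|\lambda|$ large).

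The step I expect to be the main obstacle is the distributional bookkeeping at the source points $\hat a_j$: one must justify pulling back the delta-type right-hand sides under $F$ and, crucially, show that the pulled-back exponential $e^{\lambda(w_1+\theta w_2)}$ may be replaced by the prescribed $e^{\lambda(z_1+\theta z_2)}$ at the expense only of redefining the coefficients $\hat C_{j,\theta}$. This hinges on $\hat\sigma\equiv I$ and $F$ holomorphic in a neighbourhood of each $\hat a_j$, which in turn rests on $\mathrm{supp}(\mu_{\hat\sigma})\subset X$ and $\hat a_j\in V\setminus X$; keeping track of the attendant Jacobian factors and verifying that $v\mapsto v\circ F$ is a genuine bijection between the normalised solution classes is where the real work lies.
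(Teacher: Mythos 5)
Your proposal follows essentially the same route as the paper: transfer the problem to the isotropic setting on $W$ via the Beltrami map $F$ of Proposition \ref{propBelAn}, invoke the existence and uniqueness results of \cite{HN} for isotropic Faddeev-type functions, and read off \eqref{identity} together with existence and uniqueness from the uniqueness of $\psi_{\theta}$. Your bookkeeping of the source terms and asymptotics --- holomorphy of $F$ near each $\hat a_j$ because $\mathrm{supp}(\mu_{\hat\sigma})\subset X$ and $\hat a_j \in V\setminus X$, the factor $e^{\lambda(w_{01}+\theta w_{02})}$ absorbed into the constants $\hat C_{j,\theta}$ and $\hat K_l$, the normalisation $w_{0j}(\beta_1)=0$ forcing $\hat K_1=1$ --- is in fact more explicit than the paper's, which simply asserts that $\psi'_{\theta}=\hat\psi_{\theta}\circ F^{-1}$ satisfies \eqref{isofaddeev} with the correct asymptotics from the construction of $\sigma$ and the definition of the $a_j$, using that $F$ tends to the identity near $\beta_1$.

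The one genuine gap is that you treat $F$ throughout as a global diffeomorphism onto $W$, whereas Proposition \ref{propBelAn} only guarantees a $C^k$ \emph{immersion}: $W=F(V)$ may have self-intersection points, $F^{-1}$ need not exist globally, and your correspondence $v\mapsto v\circ F$ between normalised solution classes is no longer a bijection across $Sing\,W$. The theorem is stated for the general case, and the paper splits its proof accordingly: the argument you give is precisely its proof for the embedding case, while for the immersion case it indicates the required modifications --- weakly holomorphic forms $\omega_k$ (bounded near $Sing\,W$, holomorphic on $W\setminus Sing\,W$) in the definition \eqref{delta} of $\Delta_{\theta}$, a notion of solution of $d\sigma d^c u=0$ required only on $Reg\,W\setminus\{a_1,\ldots,a_g\}$ with local boundedness, and adjusted estimates for the operators $\hat R_{\theta}$ and $R_{\lambda,\theta}$ so that Proposition 1.1 and Theorem 1.1 of \cite{HN} remain valid on a curve with simple self-intersections; all of this amounts to working on a normalization of $W$. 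Your proof should at least flag this case and explain how the transfer argument is to be interpreted there; as written, it establishes the theorem only when $F$ is injective.
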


\begin{proof} We will here provide a complete proof of Theorem \ref{teofaddeev} when the Beltrami solution $F$, given by proposition \ref{propBelAn}, is an embedding; at the end we will indicate necessary corrections for the proof of the general case.

With this assumption, by proposition \ref{propBelAn} there exists a unique diffeomorphism $F(z) = (w_1(z) , w_2(z))$ such that $w_j (z)= z_j + w_{0j}(z)$, $w_{0j} \in \tilde W^{1,p} (\tilde V), \; p >2$ and $F_{\ast} \hat \sigma = \sigma$ is isotropic on the image.

By \cite[Prop. 1.1]{HN}, the set $E_{\theta}$ is closed and nowhere dense and by \cite[Thm. 1.1]{HN}, for every $\lambda \in \compl \setminus E_{\theta}, |\lambda| \geq const(W, \{a_j \}, \theta, \sigma)$ there exists a unique Faddeev-type isotropic function $\psi_{\theta} (w,\lambda)$ as defined in \eqref{isofaddeev}.

Now let $\hat \psi_{\theta} (z,\lambda)$ be an anisotropic Faddeev-type solution. We consider $\psi'_{\theta}(w,\lambda) = \hat \psi_{\theta} (F^{-1}(w),\lambda)$ and see that
$$d\sigma d^c \psi'_{\theta} (w,\lambda) =2 \left( \sum_{j=1}^g C'_{j,\theta}(\lambda) \delta(z, a_j) \right) e^{\lambda(F_1^{-1}(w) + \theta F_2^{-1}(w))}$$
from the construction of $\sigma$ and the definition of $a_j$. Using the properties of $F$ (in particular that $F \to \mathrm{Id}$ for $z \to \infty$) and of $\hat \psi_{\theta}$, we have that $\psi'_{\theta} e^{-\lambda(w_1+ \theta w_2)} \to K_l$ with $K_1=1$; this shows that $\psi'_{\theta}$ and $\psi_{\theta}$ satisfy the same asymptotic conditions. Thus, by the uniqueness of $\psi_{\theta} (w,\lambda)$ we obtain the identity \eqref{identity}, which proves both existence and uniqueness for the case where $F$ is a diffeomorphism.

If $F$ is just an immersion the result is still valid; we can follow the same outline of the proof, taking into account the following:
\begin{itemize}
\item[i)] in the definition \eqref{delta} we have to use weakly holomorphic forms $\omega_k$, i.e. forms such that $\omega_k \in H^{1,0}(W \setminus Sing\, W)$ and $\omega_k$ are bounded on $W$ in a neighbourhood of $Sing \, W$;
\item[ii)] we say that $u$ is a solution of $d \sigma d^c u =0$ on $W \setminus \{a_1, \ldots, a_g\}$, for $a_1, \ldots, a_g \in Reg \, W$ if $u$ is locally bounded on $W \setminus \{a_1, \ldots, a_g\}$ and $d \sigma d^c u =0$ on $Reg \, W \setminus \{a_1, \ldots, a_g\}$,
\item[iii)] Proposition 1.1 and Theorem 1.1 of \cite{HN} are still valid for $W$ with points of simple self-intersection, but in the proofs one has to make some minor modifications in order to make estimates for operators $\hat R$ and $R_{\lambda}$.
\end{itemize}
The properties i) and ii) show that the holomorphic forms $\omega_k$ and the functions $u$ can be smoothly extended to a normalization of $W$.
\end{proof}

We now prove a formula, motivated by \cite[Prop. 2.7]{Sylvester}, which will play a key role in the reconstruction procedure. 

\begin{theo}
Let $\hat \psi_{\theta}$ be the Faddeev-type anisotropic functions constructed above. Then for every $z \in V \setminus X$ (in particular for $z \in \partial X$), for every $ \varepsilon >0$ and generic $\theta \in \compl$ we have 
\begin{equation} \label{recbel}
\lim_{\lambda \to \infty} \inf_{ \{ \lambda' : | \lambda' - \lambda | \leq \varepsilon \} } \frac{\log \hat \psi_{\theta}(z,\lambda')}{\lambda'} = w_1(z) + \theta w_2 (z) 
\end{equation}
\end{theo}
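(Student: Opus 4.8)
The plan is to reduce the statement to the isotropic Faddeev functions on $W$ by means of the identity \eqref{identity} and then to extract the phase $w_1(z)+\theta w_2(z)$ through a logarithmic asymptotic as $\lambda\to\infty$. First I would invoke Theorem \ref{teofaddeev} to write $\hat\psi_\theta(z,\lambda)=\psi_\theta(F(z),\lambda)$ for $z\in V$; writing $F(z)=(w_1(z),w_2(z))$, the object to analyse is thus $\psi_\theta$ evaluated at the image point $w=F(z)\in W$, whose phase coordinate $w_1+\theta w_2$ is precisely the right-hand side of \eqref{recbel}. Since the divisor points $\hat a_1,\ldots,\hat a_g$ lie in $V\setminus X$, I would restrict to $z\in (V\setminus X)\setminus\{\hat a_1,\ldots,\hat a_g\}$ (automatic on $\partial X$ when the $\hat a_j$ are taken in the interior), so that $\hat\psi_\theta(z,\cdot)$ carries none of the point singularities produced by the sources on the right of \eqref{faddeev}.

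The core of the argument is the behaviour, for large $|\lambda|$, of the normalised quantity $m_\theta(w,\lambda):=\psi_\theta(w,\lambda)\,e^{-\lambda(w_1+\theta w_2)}$. By the construction of $\psi_\theta$ in \cite{HN}, where $m_\theta$ arises as the solution of a Fredholm-type integral equation assembled from the operators $R_{\lambda,\theta}$ and $\hat R_\theta$, one has for $\lambda\notin E_\theta$ and $|\lambda|$ large that $m_\theta(w,\lambda)$ converges to a nonzero constant as $\lambda\to\infty$, uniformly for $w$ in compact subsets of $W\setminus\{a_1,\ldots,a_g\}$; in particular $m_\theta(F(z),\lambda)$ stays bounded and bounded away from $0$ for each fixed $z\in(V\setminus X)\setminus\{\hat a_j\}$. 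Granting this, I would take logarithms in $\hat\psi_\theta(z,\lambda)=e^{\lambda(w_1(z)+\theta w_2(z))}\,m_\theta(F(z),\lambda)$, fixing the branch by $\log e^{\lambda(w_1+\theta w_2)}=\lambda(w_1+\theta w_2)$ and the principal branch for $\log m_\theta$ (legitimate since $m_\theta$ is close to a fixed nonzero value), to get
\begin{equation*}
\frac{\log\hat\psi_\theta(z,\lambda)}{\lambda}=w_1(z)+\theta w_2(z)+\frac{\log m_\theta(F(z),\lambda)}{\lambda}.
\end{equation*}
As $\log m_\theta(F(z),\lambda)$ remains bounded when $\lambda\to\infty$, the last term is $O(1/|\lambda|)$ and drops out, so $\frac{\log\hat\psi_\theta(z,\lambda)}{\lambda}\to w_1(z)+\theta w_2(z)$ along every sequence $\lambda\to\infty$ avoiding $E_\theta$.

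It remains to account for the infimum over the disc $\{\lambda':|\lambda'-\lambda|\le\varepsilon\}$ and for the exceptional set. By Theorem \ref{teofaddeev} the set $E_\theta$ is closed and nowhere dense, so for every $\lambda$ the disc meets $\compl\setminus E_\theta$ in a set on which $\hat\psi_\theta(z,\cdot)$ is defined and the asymptotic above is valid; moreover the estimate of the previous step is uniform over the whole disc, giving $\frac{\log\hat\psi_\theta(z,\lambda')}{\lambda'}=w_1(z)+\theta w_2(z)+O(1/|\lambda|)$ at every admissible $\lambda'$ there. The infimum robustly discards the inadmissible values $\lambda'$ (those in $E_\theta$, or where $\psi_\theta$ vanishes and $\log$ degenerates) without affecting the common limit, whence the iterated limit in \eqref{recbel} equals $w_1(z)+\theta w_2(z)$.

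I expect the main obstacle to be the second step, namely establishing the uniform convergence of $m_\theta(\cdot,\lambda)$ to a nonzero constant as $\lambda\to\infty$: this amounts to showing that the correction term in the integral equation of \cite{HN} decays and, crucially, that $\psi_\theta$ does not vanish for large $|\lambda|$ off $E_\theta$, which is exactly what keeps $\log m_\theta$ bounded and makes the logarithmic extraction of the phase legitimate. A secondary technical point will be the consistent choice of a branch of the complex logarithm so that the infimum in \eqref{recbel} is unambiguous; this is harmless because any residual $2\pi i\,\mathbb{Z}$ ambiguity contributes only an $O(1/|\lambda|)$ term after division by $\lambda$.
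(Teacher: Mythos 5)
Your first and third steps (reducing to the isotropic $\psi_\theta$ on $W$ via \eqref{identity} and extracting the phase by taking logarithms) match the paper. The genuine gap is in your second step, which you yourself flag as the main obstacle: you assume that $m_\theta(w,\lambda)=\psi_\theta(w,\lambda)e^{-\lambda(w_1+\theta w_2)}$ converges to a nonzero constant \emph{uniformly over the whole disc} $\{\lambda':|\lambda'-\lambda|\le\varepsilon\}$ (away from $E_\theta$), and you then treat the infimum in \eqref{recbel} as a device that merely ``discards inadmissible values'' in $E_\theta$. That uniform estimate is not available, and if it were, the infimum in the statement would be superfluous. The bounds on $\mu_\theta-1$ coming from the construction in \cite{HN} involve the coefficients $\hat C_{j,\theta}(\lambda)$, which are obtained by Cramer's rule with denominator $\Delta_\theta(\lambda)$; near the zero set $E_\theta$ (which can accumulate at infinity) these are not controlled, and $E_\theta$ being closed and nowhere dense does not prevent $|\mu_\theta(\lambda')-1|$ from being large at points $\lambda'\notin E_\theta$ that are merely close to $E_\theta$.

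The paper's mechanism is different and is exactly what the infimum encodes: by \cite[Prop.~1.1]{HN} one has
\begin{equation*}
\underline\lim_{\lambda\to\infty}\ \sup_{\{\lambda':|\lambda'-\lambda|\le\varepsilon\}}|\Delta_\theta(\lambda')|\,|\lambda|^g>0,
\end{equation*}
i.e.\ every $\varepsilon$-disc around a large $\lambda$ contains at least one point $\lambda'$ where $|\Delta_\theta(\lambda')|$ is bounded below by $c|\lambda|^{-g}$; at \emph{such} points \cite[Prop.~3.1]{HN} yields $|\mu_\theta(\lambda')-1|=O(|\lambda|^{-(1-0)})$, whence
\begin{equation*}
\inf_{\{\lambda':|\lambda'-\lambda|\le\varepsilon\}}|\mu_\theta(\lambda')-1|=O\left(\frac{1}{|\lambda|^{1-0}}\right),
\end{equation*}
and only then does $\inf_{\lambda'}\log\hat\psi_\theta(z,\lambda')/\lambda'-\bigl(w_1(z)+\theta w_2(z)\bigr)=O(\log\lambda/\lambda)$ follow. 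So the infimum is not a safeguard against $E_\theta$; it is the precise form in which the decay of $\mu_\theta-1$ is known. To repair your argument, replace the claimed uniform convergence of $m_\theta$ by the disc-infimum estimate above, quoting \eqref{propdelta}. Your remarks on the branch of the logarithm and on excluding the divisor points $\hat a_j$ are fine and do not affect this issue.
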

\begin{proof}
We will use the following essential property of $\Delta_{\theta} (\lambda)$ from \cite[Prop. 1.1]{HN}, i.e., for every $\varepsilon > 0$
\begin{equation} \label{propdelta}
\underline \lim_{\lambda \to \infty} \sup_{ \{ \lambda' : | \lambda' - \lambda | \leq \varepsilon \} } |\Delta_{\theta} (\lambda')| |\lambda|^g > 0.
\end{equation}
Using \cite[Prop. 3.1]{HN} and \eqref{propdelta}, for $z \in V \setminus X$ we have $\sigma (F (z)) =1$, $$\hat \psi_{\theta}(z,\lambda) = e^{\lambda (F_1(z) + \theta F_2 (z))} \mu_{\theta} (F(z), \lambda),$$ 
$$\inf_{ \{ \lambda' : | \lambda' - \lambda | \leq \varepsilon \}} |\mu_{\theta}(\lambda') -1|= O \left( \frac{1}{\lambda^{1-0}}\right), \; \lambda \to \infty.$$
Thus one obtains
\begin{align*}
&\inf_{ \{ \lambda' : |\lambda' - \lambda| \leq \varepsilon \} } \frac{\log \hat \psi_{\theta} (z,\lambda')}{\lambda'} = w_1(z) + \theta w_2(z) + \inf_{ \{ \lambda' : |\lambda' - \lambda| \leq \varepsilon \} } \frac{\log \mu_{\theta} (w(z), \lambda')}{\lambda'} \\ 
&= w_1(z) + \theta w_2(z) + O\left( \frac{\log \lambda}{\lambda} \right) \to w_1(z) + \theta w_2(z), \; \textrm{as} \; \lambda \to \infty. \qedhere
\end{align*}
\end{proof}

\section{An Integral Equation for $\hat \psi_{\theta}|_{\partial X}$}

In this section we show how one can reconstruct the boundary values $\hat \psi_{\theta}|_{\partial X}$ from the Dirichlet-to-Neumann operator through a Fredholm-type integral equation.

Following the approach of Gutarts \cite{Gutarts}, based on Eskin \cite[Thm. 18.5]{E}, we decompose the differential operator $d \hat \sigma \meh d$ as $dd^c - Q$, where $Q$ is a compactly supported operator. Faddeev-type anisotropic functions, $\hat \psi_{\theta}(z,\lambda)=e^{\lambda(z_1+\theta z_2)}\hat \mu_{\theta}(z,\lambda)$, then satisfy

\begin{align} \label{eqGut}
&dd^c \hat \psi_{\theta}(z,\lambda) = Q \hat \psi_{\theta}(z,\lambda) + 2 \left( \sum_{j=1}^g \hat C_{j,\theta}(\lambda) \delta(z, \hat a_j) \right) e^{\lambda(z_1 + \theta z_2)},\\ \label{equamu}
&\overline \partial \left(\partial + \lambda (dz_1+ \theta dz_2) \right) \hat \mu_{\theta} = \frac{i}{2}Q \hat \mu_{\theta} + i \sum_{j=1}^g \hat C_{j,\theta}(\lambda) \delta(z, \hat a_j). 
\end{align}

\begin{theo} \label{theo51}
We have
\begin{itemize}
\item[\textit{i)}] For every $\lambda \in \compl \setminus E_{\theta}$, $|\lambda| \geq const(V, \{ a_j \}, \theta, \hat \sigma)$ the boundary values of $\hat \psi_{\theta}$ satisfy the following integral equation:
\begin{align}\label{fred} 
&\hat \psi_{\theta} (z,\lambda)|_{\partial X} = \frac{i}{2}\int_{\zeta \in \partial X} \! \! \! \! G_{\lambda,\theta}(z,\zeta) (\Lambda_{\hat \sigma} - \Lambda_0) \hat \psi_{\theta}(\zeta, \lambda) \\ \nonumber
&\qquad + i e^{\lambda(z_1 + \theta z_2)} \sum_{j=1}^g \hat C_{j,\theta}(\lambda) g_{\lambda,\theta}(z, \hat a_j) \\ \nonumber
&\qquad - \lim_{\varepsilon \to 0} \frac i 2  \! \! \!  \! \! \! \!  \! \! \! \! \! \int\limits_{ \{\zeta \in V : |\zeta-z| \geq \varepsilon\} }\! \! \! \! \! \! \! \! \! \! \hat \psi_{\theta}^0(\zeta,\lambda)dd^c G_{\lambda,\theta}(z,\zeta)  \\ \nonumber
&\qquad -\lim_{R \to \infty} \int\limits_{|\zeta_1| =R}[\bar \partial G_{\lambda,\theta}(z,\zeta) \hat \psi_{\theta}^0(\zeta,\lambda)+ G_{\lambda,\theta}(z,\zeta)\partial \hat \psi_{\theta}^0(\zeta,\lambda)],
\end{align}
with
\begin{equation} \label{54}
i \sum_{j=1}^g (\hat a_{j,1} + \theta \hat a_{j,2})^{-k} \hat C_{j,\theta} (\lambda) = - \int_{z \in \partial X} (z_1 + \theta z_2)^{-k} e^{-\lambda (z_1+ \theta z_2)} \overline{\Lambda_{\hat \sigma} \overline{\hat \psi_{\theta}} (z, \lambda)},
\end{equation}
for $k = 2, \ldots, g+1$,
\begin{equation} \nonumber
G_{\lambda,\theta}(z,\zeta) =e^{\lambda [(z_1 - \zeta_1) + \theta (z_2 -\zeta_2)]} g_{\lambda, \theta} (z, \zeta),
\end{equation}
$g_{\lambda, \theta} (z, \zeta)$ is the kernel of the operator $R_{\lambda, \theta} \circ \hat R_{\theta}$, \\ $\Lambda_0 f = d^c u |_{\partial X}$ where $dd^c u= 0$ on $X$ and $u|_{\partial X} = f$, \\
$\hat \psi_{\theta}^0(\zeta,\lambda)$ is a continuous function for $\zeta \in V \setminus ( \bigcup_j \{ a_j \} )$ such that
\begin{align*}
\hat \psi_{\theta}^0(\cdot,\lambda) |_{V \setminus X} &=\hat \psi_{\theta}(\cdot,\lambda)|_{V \setminus X},\\
dd^c \hat \psi_{\theta}^0 &=0 \text{ on } X.
\end{align*}
\item[\it ii)] Equation (\ref{fred}) is a Fredholm-type integral equation and has a unique solution in \\ $W^{1,2}(\partial X)$, $\forall \lambda \in \compl \setminus E_{\theta}$, $|\lambda| \geq const(V, \{ a_j \}, \theta, \hat \sigma)$.
\end{itemize}
\end{theo}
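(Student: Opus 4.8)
The plan is to derive the integral equation of part \textit{i)} from a Green-type representation formula based on the Faddeev Green function $G_{\lambda,\theta}$, and to obtain part \textit{ii)} from the Fredholm alternative combined with the uniqueness already proved in Theorem \ref{teofaddeev}. I would start from the decomposition $dd^c\hat\psi_\theta = Q\hat\psi_\theta + 2\sum_j\hat C_{j,\theta}(\lambda)\delta(\cdot,\hat a_j)e^{\lambda(z_1+\theta z_2)}$ of \eqref{eqGut}, and use that $G_{\lambda,\theta}(z,\zeta)=e^{\lambda[(z_1-\zeta_1)+\theta(z_2-\zeta_2)]}g_{\lambda,\theta}(z,\zeta)$, where $g_{\lambda,\theta}$ is the kernel of $R_{\lambda,\theta}\circ\hat R_\theta$. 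Because this pair of operators inverts the twisted operator $\overline\partial(\partial+\lambda(dz_1+\theta dz_2))$ of \eqref{equamu}, after the exponential conjugation that turns $dd^c$ into that operator, $G_{\lambda,\theta}(z,\cdot)$ carries the fundamental-solution singularity of $dd^c$ on the diagonal. Since $Q$ and the sources are compactly supported and $\hat\psi_\theta$ differs from $\hat\psi_\theta^0$ only inside $X$, I would apply Green's second identity for $dd^c$ to the pair $(\hat\psi_\theta^0,G_{\lambda,\theta}(z,\cdot))$ on $V\cap\{|\zeta_1|<R\}$, with small balls excised around $z$ and around each $\hat a_j$, splitting the domain along $\partial X$.

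Letting the excised balls shrink identifies the four terms of \eqref{fred}. The ball around $z$ reproduces the left-hand side $\hat\psi_\theta(z)|_{\partial X}$ (using $\hat\psi_\theta=\hat\psi_\theta^0$ on $V\setminus X$); the balls around the $\hat a_j$ produce the source contribution $ie^{\lambda(z_1+\theta z_2)}\sum_j\hat C_{j,\theta}(\lambda)g_{\lambda,\theta}(z,\hat a_j)$; the weight correction, i.e. the region where $dd^cG_{\lambda,\theta}\neq0$, gives the principal-value volume integral $\lim_{\varepsilon\to0}\int_{|\zeta-z|\geq\varepsilon}\hat\psi_\theta^0\,dd^cG_{\lambda,\theta}$; and the circle $|\zeta_1|=R$ gives the last line in the limit $R\to\infty$. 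The $\partial X$-contribution is the decisive one: applying the identity separately on $X$ to $\hat\psi_\theta$ (which is $d\hat\sigma\,d$-harmonic there, with flux $\hat\sigma\,d\hat\psi_\theta=d^c\hat\psi_\theta=\Lambda_{\hat\sigma}\hat\psi_\theta$ on $\partial X$, since $\hat\sigma\equiv I$ near the boundary) and to $\hat\psi_\theta^0$ (which is $dd^c$-harmonic there, with flux $\Lambda_0\hat\psi_\theta$), the $\hat\psi_\theta\,d^cG_{\lambda,\theta}$ terms cancel and exactly the first line $\frac i2\int_{\partial X}G_{\lambda,\theta}(\Lambda_{\hat\sigma}-\Lambda_0)\hat\psi_\theta$ survives. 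Formula \eqref{54} for the charges I would derive separately, pairing the equation with the test functions $(z_1+\theta z_2)^{-k}e^{-\lambda(z_1+\theta z_2)}$ for $k=2,\dots,g+1$: these are annihilated by the relevant adjoint operator away from $\{\hat a_j\}$ and decay at infinity, so Stokes leaves precisely the $g$ point-source terms on the left and a boundary integral over $\partial X$ expressible through $\Lambda_{\hat\sigma}$ on the right, giving the stated $g\times g$ linear system.

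For part \textit{ii)} I would write \eqref{fred} as $(I-\mathcal K)\hat\psi_\theta|_{\partial X}=\Phi_0$, where $\mathcal K$ collects the operators acting on the boundary data: the first line directly, and the third line through the linear dependence of $\hat\psi_\theta^0$ on $\hat\psi_\theta|_{\partial X}$ via $dd^c$-harmonic extension. I would prove $\mathcal K$ compact on $W^{1,2}(\partial X)$. For the first line one uses that $\Lambda_{\hat\sigma}-\Lambda_0$ is smoothing: since $\hat\sigma\equiv I$ in a neighbourhood of $\partial X$ the two Dirichlet-to-Neumann operators have the same full symbol at the boundary, hence their difference is compact, and composed with the weakly singular kernel $G_{\lambda,\theta}$ it remains compact; for the third line one uses the one-order smoothing of the layer potential with the logarithmically singular kernel $G_{\lambda,\theta}$ together with the compact Sobolev embedding on $\partial X$. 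The Fredholm alternative then reduces existence to uniqueness. For uniqueness I would feed any homogeneous $W^{1,2}(\partial X)$-solution as boundary data into the representation formula to build a global function on $V$, check it is a Faddeev-type anisotropic solution with vanishing sources and asymptotics, and invoke Theorem \ref{teofaddeev} (valid for $\lambda\notin E_\theta$, $|\lambda|$ large) to force it, and hence its trace, to vanish.

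The main obstacle is the bookkeeping behind parts of the representation formula. First, one must verify that $G_{\lambda,\theta}$, defined only indirectly as the kernel of $R_{\lambda,\theta}\circ\hat R_\theta$, genuinely carries the fundamental-solution singularity of $dd^c$, so that the excised ball around $z$ reproduces $\hat\psi_\theta(z)$ exactly rather than up to an error. Second, and more seriously, one must control the limit $R\to\infty$: the exponential factors $e^{\lambda(z_1+\theta z_2)}$ present in both $G_{\lambda,\theta}$ and $\hat\psi_\theta^0$ make convergence of the integral over $|\zeta_1|=R$ delicate, and this requires the precise asymptotics of $g_{\lambda,\theta}$ and of $\hat\psi_\theta^0$ along each end $V_l$. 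A secondary difficulty is establishing the smoothing property of $\Lambda_{\hat\sigma}-\Lambda_0$ sharply enough to conclude compactness in the $W^{1,2}(\partial X)$-topology used in the statement.
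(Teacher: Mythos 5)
Your outline follows essentially the same route as the paper: a Stokes/Green computation with the Faddeev kernel $G_{\lambda,\theta}$ for part \textit{i)}, the residue theorem (your pairing with $(z_1+\theta z_2)^{-k}e^{-\lambda(z_1+\theta z_2)}$ is exactly the paper's test form) for \eqref{54}, and for part \textit{ii)} compactness of the boundary operator plus extension of any boundary solution to a global Faddeev-type function whose uniqueness (Theorem \ref{teofaddeev}) closes the argument. The two places where you diverge are worth noting. First, the paper does not run one global Green identity: it first characterizes $\hat\psi_\theta$ in the interior by the integral equation \eqref{intGut}, and then isolates all the delicate behaviour of $G_{\lambda,\theta}(z,\zeta)$ in the $\zeta$-variable --- the diagonal singularity, the principal-value term, and the circle at infinity --- in a separate identity, Lemma \ref{lem52}. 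The two ``main obstacles'' you flag but do not resolve are precisely the content of that lemma, and its proof is not routine: since $g_{\lambda,\theta}$ is only known as the kernel of $R_{\lambda,\theta}\circ\hat R_\theta$, its equation \eqref{green73} controls it in the \emph{first} variable, and one needs the symmetry formula \eqref{symgreen} (an analogue of the symmetry of the classical Green function, with explicit correction terms coming from the harmonic projector $\harm_{\lambda,\theta}$ on a genus-$g$ surface) to transfer this to the $\zeta$-variable before your excision argument can reproduce $\hat\psi_\theta(z)$ exactly. So your plan is correct in shape but leaves the hardest step as an acknowledged gap. Second, for existence in part \textit{ii)} you invoke the Fredholm alternative, whereas the paper gets existence for free from the already-established existence of Faddeev-type functions and uses the Fredholm structure only to set up the uniqueness question; both are legitimate, but the paper's route avoids having to verify the alternative's hypotheses beyond compactness of $T$.
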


\begin{rem}
Theorem \ref{theo51} is a generalization of \cite[Thm. 1.2A]{HN} to the anisotropic case. Note that the term $e^{\lambda(z_1+\theta z_2)}$ in the right hand side of the integral equation in \cite[Thm. 1.2A]{HN} must be replaced by the term
\begin{align*}
&- \lim_{\varepsilon \to 0} \frac i 2  \! \! \!  \! \! \! \!  \! \! \! \! \! \int\limits_{ \{\zeta \in V : |\zeta-z| \geq \varepsilon\} }\! \! \! \! \! \! \! \! \! \! \psi_{\theta}^0(\zeta,\lambda)dd^c G_{\lambda,\theta}(z,\zeta)  \\ \nonumber
&-\lim_{R \to \infty} \int\limits_{|\zeta_1| =R}[\bar \partial G_{\lambda,\theta}(z,\zeta)  \psi_{\theta}^0(\zeta,\lambda)+ G_{\lambda,\theta}(z,\zeta)\partial \psi_{\theta}^0(\zeta,\lambda)],
\end{align*}
like in formula \eqref{fred} above. It is important to note that the function $\hat \psi_{\theta}^0$ in \eqref{fred} can be represented using $\hat \psi_{\theta}(\cdot, \lambda)|_{\partial X}$ by Poisson-type formulas on $X$ and $V \setminus X$:
\begin{align} \label{poisson1}
&\hat \psi_{\theta}^0(\zeta,\lambda)= \int\limits_{w \in \partial X} \hat \psi_{\theta}(w,\lambda) \partial g_+^0(\zeta,w), \quad \text{if } \zeta \in X, \\ \label{poisson2}
&\hat \psi_{\theta}^0(\zeta,\lambda)= - \int\limits_{w \in \partial X} \hat \psi_{\theta}(w,\lambda) \partial g_-^0(\zeta,w) - i \sum_{j=1}^g e^{\lambda(a_{j,1}+\theta a_{j,2})} C_{j,\theta}(\lambda)g_-^0(\zeta,a_j),
\end{align}
if $\zeta \in V \setminus X$, where $g_+^0$ is the Green function for the Laplacian $\bar \partial \partial$ on $X$ such that $g_+^0(\cdot,0)|_{\partial X}=0$, and $g_-^o$ is a Green function for $\bar \partial \partial \psi = 0$ on $V \setminus X$ with the condition $g_-^0(\cdot,0)|_{\partial X}=0$ and $\psi(\zeta) = e^{\lambda (\zeta_1 + \theta \zeta_2)}O(1)$, $\zeta \to \infty$. The existence of such a Green function on $V \setminus X$ follows from \cite[Lemma 4.1]{HN}.
\end{rem}

In order to prove Theorem \ref{theo51} we will need the following equality:
\begin{lem} \label{lem52}
For $\lambda \in \compl \setminus E_{\theta}$, $|\lambda| \geq const(V, \{ a_j \}, \theta, \hat \sigma)$ and $z \in V$ we have
\begin{align} \label{equaext}
&e^{\lambda(z_1+\theta z_2)}+ \lim_{\varepsilon \to 0} \frac i 2  \! \! \!  \! \! \! \!  \! \! \! \! \! \int\limits_{ \{\zeta \in V : |\zeta-z| \geq \varepsilon\} }\! \! \! \! \! \! \! \! \! \! \hat \psi_{\theta}(\zeta,\lambda)dd^c G_{\lambda,\theta}(z,\zeta) \\ \nonumber
&\qquad +\lim_{R \to \infty} \int\limits_{|\zeta_1| =R}[\bar \partial G_{\lambda,\theta}(z,\zeta) \hat \psi_{\theta}(\zeta,\lambda)+ G_{\lambda,\theta}(z,\zeta)\partial \hat \psi_{\theta}(\zeta,\lambda)] =0
\end{align}
\end{lem}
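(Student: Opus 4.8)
The plan is to read \eqref{equaext} as a Green--reciprocity identity on the noncompact curve $V$ for the second order operator $dd^c=2i\,\partial\overline\partial$, pairing the Faddeev solution $\hat\psi_\theta(\cdot,\lambda)$ against the kernel $G_{\lambda,\theta}(z,\cdot)$. On a Riemann surface every $(2,0)$- and $(0,2)$-form vanishes, so for two functions $u,v$ one has $d(u\,\overline\partial v+v\,\partial u)=u\,\partial\overline\partial v-v\,\partial\overline\partial u$; since $\partial\overline\partial=-\tfrac{i}{2}dd^c$, Stokes' theorem on any domain $\Omega\subset V$ yields
\begin{equation}\label{plangreen}
\int_{\Omega}\bigl(\hat\psi_\theta\,dd^c G_{\lambda,\theta}-G_{\lambda,\theta}\,dd^c\hat\psi_\theta\bigr)=2i\int_{\partial\Omega}\bigl(\hat\psi_\theta\,\overline\partial G_{\lambda,\theta}+G_{\lambda,\theta}\,\partial\hat\psi_\theta\bigr),
\end{equation}
which already explains the shape of the boundary integrand appearing in \eqref{equaext}.

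The second ingredient is the integral (Lippmann--Schwinger) representation built into the construction of $\hat\psi_\theta$. Since $G_{\lambda,\theta}(z,\zeta)=e^{\lambda[(z_1-\zeta_1)+\theta(z_2-\zeta_2)]}g_{\lambda,\theta}(z,\zeta)$ and $g_{\lambda,\theta}$ is the kernel of $R_{\lambda,\theta}\circ\hat R_\theta$, which inverts the exponentially weighted operator $\overline\partial(\partial+\lambda(dz_1+\theta dz_2))$, the function $G_{\lambda,\theta}(z,\cdot)$ is a normalized fundamental solution: $\tfrac{i}{2}dd^c_\zeta G_{\lambda,\theta}(z,\zeta)=\delta(z,\zeta)+h(z,\zeta)$, where $h$ collects the smooth finite-rank corrections coming from the harmonic projection $\harm$ and from the subtraction at $\beta_1$ in the definition of $R$. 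Because $\hat\psi_\theta$ is built with the constants $\hat C_{j,\theta}(\lambda)$ chosen via \eqref{54} precisely to annihilate these harmonic obstructions, one obtains, using \eqref{eqGut} in the form $\tfrac{i}{2}dd^c\hat\psi_\theta=\tfrac{i}{2}Q\hat\psi_\theta+i\sum_j\hat C_{j,\theta}(\lambda)\delta(\cdot,\hat a_j)e^{\lambda(z_1+\theta z_2)}$, the representation
\begin{equation}\label{planLS}
\hat\psi_\theta(z,\lambda)=e^{\lambda(z_1+\theta z_2)}+\frac{i}{2}\int_{\zeta\in V}G_{\lambda,\theta}(z,\zeta)\,dd^c_\zeta\hat\psi_\theta(\zeta,\lambda).
\end{equation}

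Next I would evaluate $\tfrac{i}{2}\int_V G_{\lambda,\theta}\,dd^c\hat\psi_\theta$ a second way. Applying \eqref{plangreen} on $\Omega_{\varepsilon,R}=\{\zeta\in V:|\zeta-z|\geq\varepsilon,\ |\zeta_1|\leq R\}$ and solving for the volume term transfers $dd^c$ from $\hat\psi_\theta$ onto $G_{\lambda,\theta}$ at the price of boundary integrals over $\{|\zeta_1|=R\}$ and over the diagonal circle $\{|\zeta-z|=\varepsilon\}$, namely $\tfrac i2\int_{\Omega_{\varepsilon,R}}G_{\lambda,\theta}\,dd^c\hat\psi_\theta=\tfrac i2\int_{\Omega_{\varepsilon,R}}\hat\psi_\theta\,dd^c G_{\lambda,\theta}+\int_{|\zeta_1|=R}(\hat\psi_\theta\overline\partial G_{\lambda,\theta}+G_{\lambda,\theta}\partial\hat\psi_\theta)-\int_{|\zeta-z|=\varepsilon}(\hat\psi_\theta\overline\partial G_{\lambda,\theta}+G_{\lambda,\theta}\partial\hat\psi_\theta)$. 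The diagonal term is computed by applying \eqref{plangreen} on the disc $\{|\zeta-z|\leq\varepsilon\}$: as $\varepsilon\to0$ the logarithmic singularity of $G_{\lambda,\theta}$ (whose exponential factor tends to $1$ at the diagonal) and the unit mass of $\tfrac i2 dd^c G_{\lambda,\theta}$ give, for $z$ away from the $\hat a_j$, $\lim_{\varepsilon\to0}\int_{|\zeta-z|=\varepsilon}(\hat\psi_\theta\overline\partial G_{\lambda,\theta}+G_{\lambda,\theta}\partial\hat\psi_\theta)=-\hat\psi_\theta(z,\lambda)$. Letting $\varepsilon\to0$, $R\to\infty$ and substituting \eqref{planLS} on the left, the two copies of $\hat\psi_\theta(z,\lambda)$ cancel and leave exactly
$$e^{\lambda(z_1+\theta z_2)}+\frac{i}{2}\int_{|\zeta-z|\geq\varepsilon}\hat\psi_\theta\,dd^c G_{\lambda,\theta}+\lim_{R\to\infty}\int_{|\zeta_1|=R}(\hat\psi_\theta\overline\partial G_{\lambda,\theta}+G_{\lambda,\theta}\partial\hat\psi_\theta)=0,$$
which is \eqref{equaext}.

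The main obstacle is the analysis at infinity: one must show that $\lim_{R\to\infty}\int_{|\zeta_1|=R}(\cdots)$ exists and that the volume term converges, i.e. control the asymptotics of $\hat\psi_\theta$ and $G_{\lambda,\theta}$ on each end $V_l$ as $\zeta\to\beta_l$. Here the exponential weights cancel, since $\hat\psi_\theta\sim\hat K_l\,e^{\lambda(\zeta_1+\theta\zeta_2)}$ while $G_{\lambda,\theta}$ carries the factor $e^{-\lambda(\zeta_1+\theta\zeta_2)}$, so that $G_{\lambda,\theta}\hat\psi_\theta\sim e^{\lambda(z_1+\theta z_2)}$ times bounded terms; convergence then relies on the decay $dz_2/dz_1=\gamma_l+O(1/z_1^2)$ of the charts recorded in Section \ref{prelim} together with the mapping and decay properties of $g_{\lambda,\theta}$. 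A secondary technical point, already absorbed above, is the justification of \eqref{planLS} despite the finite-rank harmonic corrections $h$ in $R$; this is exactly where the hypotheses $\lambda\notin E_\theta$, $|\lambda|\geq const$, and genericity of $\theta$ and $\{\hat a_j\}$ are used, guaranteeing both the well-posedness of $\hat\psi_\theta$ and the vanishing of the obstruction terms.
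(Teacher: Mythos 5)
Your overall strategy coincides with the paper's: both start from the integral representation of the Faddeev solution (your Lippmann--Schwinger identity is exactly the paper's relation $\hat\mu_\theta(z,\lambda)=1+\int_{\zeta\in V}g_{\lambda,\theta}(z,\zeta)\,\overline\partial(\partial+\lambda(d\zeta_1+\theta d\zeta_2))\hat\mu_\theta(\zeta,\lambda)$, obtained from the equivalence of \eqref{equamu} and \eqref{intmu}), then integrate by parts to transfer the second-order operator onto the kernel, and finally must identify the diagonal contribution with $\hat\psi_\theta(z,\lambda)$ itself while controlling the boundary terms at infinity. Your remarks on the cancellation of exponential weights on the ends $V_l$ are also consistent with what the paper needs.

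The genuine gap is at the diagonal step. You assert that $\tfrac i2\,dd^c_\zeta G_{\lambda,\theta}(z,\zeta)=\delta(z,\zeta)+h(z,\zeta)$ with $h$ a smooth finite-rank correction, and deduce $\lim_{\varepsilon\to0}\int_{|\zeta-z|=\varepsilon}(\hat\psi_\theta\overline\partial G_{\lambda,\theta}+G_{\lambda,\theta}\partial\hat\psi_\theta)=-\hat\psi_\theta(z,\lambda)$. But the normalization you are implicitly invoking, namely \eqref{green73}, describes $\overline\partial(\partial+\lambda(dz_1+\theta dz_2))g_{\lambda,\theta}(z,\zeta)$ in the \emph{first} variable, whereas after integration by parts the operator $\overline\partial(\partial-\lambda(d\zeta_1+\theta d\zeta_2))$ falls on $g_{\lambda,\theta}(z,\zeta)$ in the \emph{second} variable. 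The kernel of $R_{\lambda,\theta}\circ\hat R_\theta$ is not symmetric, so the presence of a unit point mass at $\zeta=z$ in the second variable is precisely what must be proved; this is the content of the paper's symmetry formula \eqref{symgreen}, which relates $G_{\lambda,\theta}(z,\zeta)$ to $\overline{G_{-\lambda,\theta}(\zeta,z)}$ up to explicit correction terms built from $\harm_{\lambda,\theta}(\hat R(\delta(\cdot,z)))$ and is established by adapting the classical symmetry argument for Green functions together with \eqref{green73}. Moreover the correction to the delta is not harmless smooth noise: it is the term $\bar\lambda(d\bar\zeta_1+\bar\theta d\bar\zeta_2)\wedge\harm_{\lambda,\theta}(\hat R(\delta(\zeta,z)))$, and the paper must separately verify that its integral against $\hat\mu_\theta$ over $\{|\zeta-z|\leq\varepsilon\}$ vanishes as $\varepsilon\to0$ and that it is globally integrable on $V$ before the limit \eqref{limitgreen} --- and with it \eqref{equaext} --- follows. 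Without the symmetry formula and these estimates, the central limit in your argument remains unjustified.
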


\begin{proof}
We write $\hat \mu_{\theta}$ as the solution of the integral equation
\begin{align} \label{intmu}
\hat \mu_{\theta}(z,\lambda) = 1 + \frac{i}{ 2} \int_{\zeta \in X} g_{\lambda,\theta}(z,\zeta) Q \hat \mu_{\theta}(\zeta,\lambda) + i \sum_{j=1}^g \hat C_{j,\theta}(\lambda) g_{\lambda,\theta}(z,a_j),
\end{align}
for $z \in V$. The equivalence between \eqref{equamu} and \eqref{intmu} implies the equality
\begin{equation} \nonumber
\hat \mu_{\theta}(z,\lambda)= 1 + \int_{\zeta \in V} g_{\lambda,\theta}(z,\zeta) \bar \partial (\partial + \lambda (d \zeta_1+\theta d \zeta_2)) \hat \mu_{\theta}(\zeta,\lambda),
\end{equation}
which becomes, using integration by parts,
\begin{align*}
&\hat \mu_{\theta}(z,\lambda)=1 + \int_{\zeta \in V} \bar \partial (\partial - \lambda (d \zeta_1+\theta d \zeta_2)) g_{\lambda,\theta}(z,\zeta)  \hat \mu_{\theta}(\zeta,\lambda) \\
&\qquad + \lim_{R \to \infty}\int\limits_{|\zeta_1|=R} \! \! [\bar \partial g_{\lambda,\theta}(z,\zeta) \hat \mu_{\theta}(\zeta,\lambda)+g_{\lambda,\theta}(z,\zeta)(\partial + \lambda (d \zeta_1+\theta d \zeta_2))\hat \mu_{\theta}(\zeta,\lambda)].
\end{align*}
Now, in order to obtain \eqref{equaext}, it is sufficient to prove the following limit:
\begin{equation} \label{limitgreen}
\lim_{\varepsilon \to 0} \int\limits_{ \{\zeta \in V: |\zeta - z| \leq \varepsilon\} }\bar \partial (\partial - \lambda (d \zeta_1+\theta d \zeta_2)) g_{\lambda,\theta}(z,\zeta)  \hat \mu_{\theta}(\zeta,\lambda) = \hat \mu_{\theta}(z,\lambda).
\end{equation}
This limit is based on the following formula
\begin{align} \label{symgreen}
&G_{\lambda,\theta}(z,\zeta) - \overline{G_{-\lambda,\theta}(\zeta,z)} \\ \nonumber
& = - \int_{w \in V}G_{\lambda,\theta}(w,\zeta)e^{\bar \lambda[(\bar w_1-\bar z_1)+\bar \theta(\bar w_2-\bar z_2)]} \overline{\harm_{-\lambda, \theta}(\hat R(\delta(\cdot,z)))} \wedge \lambda(d w_1 + \theta d w_2) \\ \nonumber
&\quad + \int_{w \in V}\overline{G_{-\lambda,\theta}(w,z)}e^{\lambda[(w_1-\zeta_1)+\theta(w_2-\zeta_2)]} \harm_{\lambda, \theta}(\hat R(\delta(\cdot,\zeta))) \wedge \bar \lambda(d \bar w_1 + \bar \theta d \bar w_2),
\end{align}
where
\begin{align*}
\harm_{\lambda, \theta}(\hat R(\delta(\cdot,\zeta))) &= e_{-\lambda,\theta}\harm(e_{\lambda,\theta}(\hat R(\delta(\cdot,\zeta)))),\\
e_{\lambda,\theta}(w) &= e^{\lambda(w_1+\theta w_2)-\bar \lambda( \bar w_1+ \bar \theta \bar w_2)}.
\end{align*}
The proof of \eqref{symgreen} follows the proof of a classical theorem about the symmetry of the classical Green function (see \cite[p.434]{Gam}), combined with the following statement from \cite[Remark 1.2]{HN}
\begin{equation} \label{green73}
\bar \partial( \partial + \lambda(d z_1 + \theta d z_2))g_{\lambda,\theta}(z,\zeta) = \delta(z,\zeta) + \bar \lambda (d \bar z_1 + \bar \theta d \bar z_2) \wedge \harm_{\lambda,\theta}(\hat R ( \delta(z,\zeta))).
\end{equation}
Limit \eqref{limitgreen} is now given by formula \eqref{symgreen} and the following estimates:
\begin{align*}
\lim_{\varepsilon \to 0} &\int\limits_{ \{ \zeta \in V:|\zeta - z| \leq \varepsilon \} } \bar \lambda (d \bar \zeta_1+ \bar \theta d \bar \zeta_2) \wedge \harm_{\lambda,\theta}(\hat R(\delta(\zeta,z)))\hat \mu_{\theta}(\zeta,\lambda) = 0, \\
&\int_V \bar \lambda (d  \bar \zeta_1+ \bar \theta d \bar \zeta_2) \wedge \harm_{\lambda,\theta}(\hat R(\delta(\zeta,z)))\hat \mu_{\theta}(\zeta,\lambda) < \infty . \qedhere
\end{align*}
\end{proof}

\begin{proof}[Proof of Theorem \ref{theo51}]
\textit{i)} Like in the isotropic case (see \cite[Lemmas 3.1, 3.3]{HN}) a solution $\hat \psi_{\theta}$ of the differential equation \eqref{eqGut} can be characterized as a solution of the integral equation
\begin{align} \label{intGut}
&\hat \psi_{\theta} (z,\lambda) = \frac i 2 \int_{\zeta \in X}G_{\lambda,\theta}(z,\zeta)  Q \hat \psi_{\theta}(\zeta, \lambda) \\ \nonumber
&\qquad + e^{\lambda (z_1 + \theta z_2)}+ i e^{\lambda(z_1 + \theta z_2)} \sum_{j=1}^g \hat C_{j,\theta}(\lambda) g_{\lambda,\theta}(z, \hat a_j),
\end{align}
where $\{\hat C_{j,\theta}(\lambda)\}$ satisfy \eqref{54}. Indeed \eqref{eqGut} implies that $\partial \hat \psi_{\theta}$ is holomorphic on $V \setminus (X \cup \bigcup_j \{ \hat a_j\})$, the estimate
\begin{equation} \nonumber
\partial \hat \psi_{\theta} = e^{z_1 + \theta z_2} O(1), \quad \textrm{for } z \to \infty
\end{equation}
and the equality
\begin{equation} \nonumber
\mathrm{Res}_{\hat a_j} \partial \hat \psi_{\theta} = \frac{\hat C_{j,\theta}}{2 \pi}e^{\hat a_{j,1} + \theta \hat a_{j,2}}.
\end{equation}
The residue theorem applied to the form
\begin{equation} \nonumber
\frac{e^{-(z_1 + \theta z_2)}\partial \hat \psi_{\theta}}{(z_1+\theta z_2)^k}
\end{equation}
gives \eqref{54}.

Now, using equality \eqref{equaext}, for $z \in V \setminus X$ we obtain
\begin{align*}
&\frac{i}{2} \int_{\partial X} e^{\lambda [(z_1 - \zeta_1) + \theta (z_2 - \zeta_2)]} g_{\lambda, \theta} (z, \zeta) (\Lambda_{\hat \sigma} - \Lambda_0) \hat \psi_{\theta}(\zeta, \lambda) \\
&\qquad = \frac{i}{2}\int_{\partial X} G_{\lambda,\theta}(z,\zeta) (\Lambda_{\hat \sigma} - \Lambda_0) \hat \psi_{\theta}(\zeta, \lambda) \\
&\qquad = \frac{i}{2}\int_{X} G_{\lambda,\theta}(z,\zeta) dd^c \hat \psi_{\theta}(\zeta, \lambda)- \frac{i}{2}\int_X dd^c G_{\lambda,\theta}(z,\zeta) \, [\hat \psi_{\theta}(\zeta, \lambda) -\hat \psi_{\theta}^0(\zeta, \lambda)]\\
&\qquad = \frac{i}{2}\int_X G_{\lambda,\theta}(z,\zeta) Q \hat \psi_{\theta} + \lim_{\varepsilon \to 0} \frac i 2  \! \! \!  \! \! \! \!  \! \! \! \! \! \int\limits_{ \{\zeta \in V : |\zeta-z| \geq \varepsilon\} }\! \! \! \! \! \! \! \! \! \! \hat \psi_{\theta}^0(\zeta,\lambda)dd^c G_{\lambda,\theta}(z,\zeta) + e^{\lambda (z_1 + \theta z_2)}\\
& \qquad \qquad  +\lim_{R \to \infty} \int\limits_{|\zeta_1| =R}[\bar \partial G_{\lambda,\theta}(z,\zeta) \hat \psi_{\theta}^0(\zeta,\lambda)+ G_{\lambda,\theta}(z,\zeta)\partial \hat \psi_{\theta}^0(\zeta,\lambda)] \\
&\qquad = \hat \psi_{\theta}(z,\lambda)- i e^{\lambda(z_1 + \theta z_2)} \sum_{j=1}^g \hat C_{j,\theta}(\lambda) g_{\lambda,\theta}(z, \hat a_j)\\
&\qquad \qquad + \lim_{\varepsilon \to 0} \frac i 2  \! \! \!  \! \! \! \!  \! \! \! \! \! \int\limits_{ \{\zeta \in V : |\zeta-z| \geq \varepsilon\} }\! \! \! \! \! \! \! \! \! \! \hat \psi_{\theta}^0(\zeta,\lambda)dd^c G_{\lambda,\theta}(z,\zeta)  \\
&\qquad \qquad +\lim_{R \to \infty} \int\limits_{|\zeta_1| =R}[\bar \partial G_{\lambda,\theta}(z,\zeta) \hat \psi_{\theta}^0(\zeta,\lambda)+ G_{\lambda,\theta}(z,\zeta)\partial \hat \psi_{\theta}^0(\zeta,\lambda)].
\end{align*}
The restriction of the last equation to the boundary $\partial X$ from outside yields \eqref{fred}.

\textit{ii)} To prove that \eqref{fred} is a Fredholm-type equation, for fixed $\lambda \in \compl \setminus E_{\theta}$, $|\lambda| \geq const(V, \{ a_j \}, \theta, \hat \sigma)$, we proceed as follows. Let $f(z) = \hat \psi_{\theta}(z,\lambda)-e^{\lambda(z_1+\theta z_2)}$ and $f^0(z) = \hat \psi_{\theta}^0(z,\lambda) -e^{\lambda(z_1+\theta z_2)}$; we can write equation \eqref{fred} as
\begin{equation} \label{newfred}
f+Tf=g,
\end{equation} 
where
\begin{align}
g(z)&=\frac{i}{2}\int_{\zeta \in \partial X}G_{\lambda,\theta}(z,\zeta)(\Lambda_{\hat \sigma}-\Lambda_0)e^{\lambda(\zeta_1 + \theta \zeta_2)} \\ \nonumber
&\qquad + i e^{\lambda (z_1 + \theta z_2)}\sum_{j=1}^g \hat C^0_{j,\theta}(\lambda)g_{\lambda,\theta}(z,\hat a_j), \\ \label{opt}
Tf(z)&= -\frac{i}{2}\int_{\zeta \in \partial X}G_{\lambda,\theta}(z,\zeta)(\Lambda_{\hat \sigma}-\Lambda_0)f(\zeta)\\ \nonumber
&\qquad -ie^{\lambda (z_1 + \theta z_2)} \sum_{j=1}^g \hat C^1_{j,\theta}(\lambda)g_{\lambda,\theta}(z,\hat a_j) \\\nonumber
&\qquad +  \lim_{\varepsilon \to 0} \frac i 2  \! \! \!  \! \! \! \!  \! \! \! \! \! \int\limits_{ \{\zeta \in V : |\zeta-z| \geq \varepsilon\} }\! \! \! \! \! \! \! \! \! \! f^0(\zeta)dd^c G_{\lambda,\theta}(z,\zeta) \\ \nonumber
&\qquad + \lim_{R \to \infty} \int\limits_{|\zeta_1| =R}[\bar \partial G_{\lambda,\theta}(z,\zeta) f^0(\zeta)+ G_{\lambda,\theta}(z,\zeta)\partial f^0(\zeta)],
\end{align}
where $\hat C^0_{j,\theta}+\hat C^1_{j,\theta} = \hat C_{j,\theta}$ ($C^0_{j,\theta}$ is obtained from \eqref{54} with $e^{\lambda(z_1+\theta z_2)}$ instead of $\hat \psi_{\theta}(z,\lambda)$, so it is independent from $f$).

We have now that equation \eqref{newfred} is a Fredholm-type integral equation for $f \in W^{1,2}(\partial X)$. Indeed $g \in W^{1,2}(\partial X)$ and $T$ is a compact operator: this follows from the compactness of $\Lambda_{\hat \sigma}-\Lambda_0$ for the first term in \eqref{opt}, from formulas \eqref{poisson1}, \eqref{poisson2} and \eqref{green73} for the third term, while the second and the fourth term are operators with finite-dimensional range.
\smallskip

The existence, for $\lambda \in \compl \setminus E_{\theta}$, $|\lambda| \geq const(V, \{ a_j \}, \theta, \hat \sigma)$, of a unique Faddeev-type function $\hat \psi_{\theta} (z,\lambda)$ imply the existence, for such $\lambda$, of a solution of (\ref{fred}) with residue data $i\hat C_{j,\theta} (\lambda)$, $j=1,\ldots,g$.

Let us prove the uniqueness, with $\lambda$ as above, of the solution of (\ref{fred}) in $W^{1,2}(\partial X)$. Suppose that $\hat \psi_{\theta} \in W^{1,2}(\partial X)$ solves (\ref{fred}), and consider $\hat \mu_{\theta} = e^{-\lambda(z_1 +\theta z_2)} \hat \psi_{\theta}$ as the Dirichlet data for 
$$\overline \partial (\partial + \lambda (dz_1 + \theta dz_2))\hat \mu_{\theta} = \frac i 2 Q \hat \mu_{\theta}$$
on X; thanks to this equation we can well define $\hat \mu_{\theta}$ on $\overline X$. We also define $\hat \mu_{\theta}$ on $V \setminus \overline X$ by (\ref{fred}). The function $\hat \mu_{\theta}$ then defined on $V$ belongs to $C(V \setminus \cup_{j=1}^g \{a_j \})$.

To show that $\hat \psi_{\theta} = e^{\lambda(z_1 +\theta z_2)} \hat \mu_{\theta}$ satisfies \eqref{faddeev}, \eqref{eqGut} globally, we can follow without modification the arguments of \cite[Prop. 5.1]{HN}, based on the Sohotsky-Plemelj jump formula.

The uniqueness of the solution of \eqref{fred} in $W^{1,2}(\partial X)$ with residue data $\{ \hat C_{j,\theta}\}$ now follows from the uniqueness for Faddeev-type functions for $\lambda \in \compl \setminus E_{\theta}$, $|\lambda| \geq const(V, \{ a_j \}, \theta, \hat \sigma)$.
\end{proof}

\section{Cauchy-type Formulas} \label{cauchy}
Following our reconstruction scheme, after recovering the boundary value of the Beltrami solution $F$, we obtain $F(\partial X) = \Gamma$.

Thus the remaining problem is reconstructing the interior points of a bordered Riemann surface $Y$ given the boundary $\Gamma$.

We will use the coordinates $z =(z_1, z_2) \in \compl^2$ and the projection $p: \compl^2 \to \compl$ on the first factor, $p(z) = z_1$. For $a \in \compl$ we define
$$N_{a} = \frac{1}{2 \pi i} \int_{\Gamma}\frac{ dz_1}{z_1 - a} \in \nat,$$
which counts the number of intersection points of the line $\{ z_1 = a \}$ with the surface $Y$ that we are going to reconstruct. Let us remark that, if we call $Y_1, \ldots, Y_s$ the bounded connected components of $\compl \setminus p (\Gamma)$, we have that $N_a$ is constant on every $Y_h$, $h=1,\ldots,s$.

We have the following proposition, the first part of which is a special case of a result by Harvey-Shiffman \cite{HSh}, while the second part goes back to Cauchy. 

\begin{prop} Let $\Gamma$ be a $C^1$-closed curve in $\compl^2$:
\begin{itemize}
\item[\it i)] if $Y_1, Y_2$ are two bordered Riemann surfaces in $\compl^2$ with the same boundary $\Gamma$, then $Y_1 = Y_2$;
\item[\it ii)] the interior points of the unique Riemann surface $Y$ whose boundary is $\Gamma$ can be explicitly found from the system of equation
\begin{equation} \label{system}
\frac{1}{2 \pi i} \int_{\Gamma} z_2^k(z_1) \frac{ dz_1}{z_1 - a} = \sum_{j=1}^{N_a} (z_2^{(j)})^k (a), \; \; k=1, \ldots, N_a.
\end{equation}
The points of the surface are the pairs $(a, z_2^{(j)}(a))$, for  $j=1,\ldots, N_a$, $a \in Y_h$, $h=1,\ldots, s$.
\end{itemize}
\end{prop}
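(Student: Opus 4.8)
The plan is to prove the two assertions separately, since they rely on genuinely different ideas: part \textit{i)} is a uniqueness statement for bordered Riemann surfaces with prescribed boundary, while part \textit{ii)} is an explicit reconstruction via Cauchy's theorem applied to symmetric functions of the fiber coordinates.

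For part \textit{i)}, the first thing I would do is invoke the cited Harvey--Shiffman boundary characterization: a real $C^1$ curve $\Gamma \subset \compl^2$ bounds a bordered Riemann surface precisely when it satisfies the moment conditions (it is maximally complex, and all holomorphic $1$-forms integrate to zero over $\Gamma$), and when such a surface exists it is uniquely determined. Rather than reprove Harvey--Shiffman, I would derive uniqueness directly: suppose $Y_1$ and $Y_2$ both have boundary $\Gamma$. Consider the $2$-chain $Y_1 - Y_2$, whose boundary vanishes, so it is a cycle; the key point is that for \emph{any} holomorphic $2$-form $\omega = h(z_1,z_2)\,dz_1 \wedge dz_2$ with $h$ entire (or polynomial) one has, by Stokes and holomorphicity, $\int_{Y_1}\omega = \int_{\Gamma} H = \int_{Y_2}\omega$ where $dH = \omega$ along the surface. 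Equality of all such integrals forces $Y_1$ and $Y_2$ to coincide as currents, hence as sets by regularity. Concretely, I would fix a generic value $a$ of the projection $p$ and show that the fibers $Y_1 \cap \{z_1 = a\}$ and $Y_2 \cap \{z_1 = a\}$ agree as finite point sets, using that their symmetric functions are determined by the boundary integrals in \eqref{system}; this ties directly into part \textit{ii)}.

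For part \textit{ii)}, the strategy is the classical Cauchy argument. Over a regular value $a$ in a component $Y_h$, the surface $Y$ meets the line $\{z_1 = a\}$ in exactly $N_a$ points, with second coordinates $z_2^{(1)}(a),\ldots,z_2^{(N_a)}(a)$, and these vary holomorphically in $a$ by the implicit function theorem since $Y$ is a bordered Riemann surface and $p$ is a branched cover. For each fixed $k$, the function $a \mapsto \sum_{j=1}^{N_a}(z_2^{(j)}(a))^k$ is a holomorphic function on $Y_h$ whose boundary values are governed by $\Gamma$. I would apply the Cauchy integral formula: the $1$-form $z_2^k\,\frac{dz_1}{z_1 - a}$ integrated over $\Gamma = \partial Y$ picks up, by the residue theorem on the surface $Y$, exactly the sum of $z_2^k$ over the sheets lying above $a$, giving \eqref{system}. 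Once the power sums $\sum_j (z_2^{(j)})^k$ are known for $k = 1,\ldots,N_a$, the Newton--Girard identities recover the elementary symmetric functions, hence the monic polynomial $\prod_{j=1}^{N_a}(t - z_2^{(j)}(a))$, whose roots are precisely the fiber coordinates $z_2^{(j)}(a)$. This determines the points $(a, z_2^{(j)}(a))$ explicitly, and letting $a$ range over each $Y_h$ recovers all of $Y$.

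The main obstacle I anticipate is the careful handling of branch points and the boundary regularity needed to justify the residue computation. The count $N_a$ is constant on each component $Y_h$ but jumps across $p(\Gamma)$, and at branch points of the projection $p$ the individual functions $z_2^{(j)}(a)$ are only locally defined and may coincide; what saves the argument is that the \emph{symmetric} combinations $\sum_j (z_2^{(j)})^k$ remain globally single-valued and holomorphic across branch points, so the Cauchy formula \eqref{system} holds without modification. I would also need to ensure that the orientation of $\Gamma$ as $\partial Y$ matches the sign convention in \eqref{system}, and that the boundary integral converges, which follows from $\Gamma$ being $C^1$ and compact. The uniqueness in part \textit{i)} then follows cleanly, since two surfaces with the same $\Gamma$ produce identical symmetric functions over every regular $a$, forcing identical fibers and hence $Y_1 = Y_2$.
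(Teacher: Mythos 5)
Your treatment of part \emph{ii)} is essentially identical to the paper's (very terse) proof: the identity \eqref{system} is obtained from the residue theorem applied to $z_2^k\,dz_1/(z_1-a)$ on the branched cover $p\colon Y\to\compl$, and the fiber coordinates are then recovered from the power sums via Newton--Girard. Your remarks about branch points, single-valuedness of the symmetric functions, and orientation are exactly the details the paper leaves implicit, and they are correct. For part \emph{i)} the paper simply cites Harvey--Shiffman and proves nothing; your proposal goes further by sketching a direct uniqueness argument, which is a legitimate and arguably more informative route.

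One step in that sketch, however, would fail as written: the claim that testing the $2$-chain $Y_1-Y_2$ against holomorphic $2$-forms $\omega=h\,dz_1\wedge dz_2$ forces the surfaces to coincide is vacuous, because a $(2,0)$-form restricts to zero on any complex curve, so $\int_{Y_1}\omega=\int_{Y_2}\omega=0$ identically and no information is gained. The moment conditions that actually characterize the boundary of a holomorphic chain are integrals of $(1,0)$-forms over $\Gamma$ --- precisely the integrals $\int_\Gamma z_2^k\,dz_1/(z_1-a)$ appearing in \eqref{system} --- so the ``concrete'' half of your argument is the correct one and the $2$-form paragraph should be deleted rather than repaired. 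To finish the fiber-wise uniqueness argument you also need a short closing step: the integrals \eqref{system} show that $Y_1$ and $Y_2$ have the same fibers (as divisors) over every $a\in\compl\setminus p(\Gamma)$, and that both have empty fiber over the unbounded component; one then passes to closures, using that $Y_i\cap p^{-1}(p(\Gamma))$ has empty interior in $Y_i$ (i.e.\ no component of $Y_i$ sits inside a line $z_1=\mathrm{const}$), to conclude $Y_1=Y_2$ as sets. With that correction and that closing remark your argument is complete and in fact supplies a proof of \emph{i)} that the paper only references.
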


By $i)$ we have that $F(X) = Y$; then, from the regularity assumptions on $X$ and $F$ we deduce that $Y$ is a Riemann surface with $C^1$ boundary.

\begin{proof} {\it ii)} Formulas \eqref{system} are true by residue theorem. Now, if $a \in Y_h$ for some $h$, since we know the Newton sums $\sum_{j=1}^{N_a} (z_2^{(j)})^k (a)$ for every $k$, we can find $(z_2^{(j)}) (a)$ by a well-known algebra result.
\end{proof}

\section{Reconstruction of $\sigma$} \label{reconstruction}
Thanks to the integral equation \eqref{fred} and formulas \eqref{recbel}, \eqref{identity}, we can find $\psi_{\theta} (w,\lambda)|_{\partial Y}$ from $\Lambda_{\hat \sigma}$, where $\psi_{\theta}$ is a Faddeev-type isotropic solution as in the proof of Theorem \ref{teofaddeev} and $Y$ is the reconstructed surface in section \ref{cauchy}.
By the remarks in section \ref{basic}, from $\Lambda_{\hat \sigma}$ and $F|_{\partial X}$ we can also find $\Lambda_{\sigma}$ on $\partial Y$.

Thus we have that $\Lambda_{\hat \sigma}$ determines $\Lambda_{\sigma}$ uniquely and $\psi_{\theta}(w,\lambda) |_{\partial Y}$, for $\lambda \in \compl \setminus E_{\theta}$, $|\lambda| \geq const(V, \{ a_j \}, \theta, \hat \sigma)$ and for $\theta \in \compl$. This will be sufficient to recover $\sigma$ on $Y$.

We define $\tilde \psi_{\theta} = \sqrt{\sigma} \psi_{\theta}$, so that by \eqref{isofaddeev} $dd^c \tilde \psi_{\theta} = q \tilde \psi_{\theta} + \sum_{j=1}^g C_{j,\theta}(\lambda) \delta (z,a_j)$, where $q = \frac{dd^c \sqrt{\sigma}}{\sqrt{\sigma}}$, and we have the following theorem:

\begin{theo}[Thm. 1.2B \cite{HN}] \label{recona}
The function $\sigma(w), \; w \in Y$, can be reconstructed from the Dirichlet-to-Neumann data
$$\tilde \psi_{\theta}|_{\partial Y} = \mu_{\theta}|_{\partial Y} e^{\lambda (z_1+\theta z_2)} \to \overline \partial \tilde \psi_{\theta}|_{\partial Y}$$
using an explicit formula. In particular, for the case $W= \{ z \in \compl^2 : P(z) =0 \}$, where $P$ is a polynomial of degree $N$, this formula is as follows. Let $\{ w_m \}$ be the points of $W$ where $(dz_1 + \theta dz_2) |_W (w_m) = 0$, $m=1, \ldots, M$. Then, for almost every $\theta$, the value $\frac{dd^c \sqrt{\sigma}}{\sqrt{\sigma} dd^c |z|^2}|_W (w_m)$ can be found from the following linear system:
\begin{align} \label{sysrec}
\tau(1+ o(1))& \frac{d^k}{d \tau^k}\left( \int_{z \in \partial Y} e_{i\tau, \theta} \overline \partial \mu_{\theta}(z,i \tau) \right)\\ \nonumber
&= \tau(1+ o(1)) \frac{d^k}{d \tau^k}\left( \int_{z \in Y} e_{i\tau, \theta} q \mu_{\theta}(z,i \tau) \right) \\ \nonumber
&= \left. \sum_{m=1}^M \frac{i \pi (1 + |\theta|^2)}{2}\frac{dd^c \sqrt{\sigma}}{\sqrt{\sigma} dd^c |z|^2}\right|_W (w_m) \\ \nonumber
&\times \frac{|\frac{\partial P}{\partial z_1}(w_m)|^3 \frac{d^k}{d \tau^k} \exp i \tau [(w_{m,1} + \theta w_{m,2}) + (\overline w_{m,1} + \overline \theta \overline w_{m,2})]}{|\frac{\partial^2 P}{\partial z_1^2}(\frac{\partial P}{\partial z_2})^2 -2 \frac{\partial^2 P}{\partial z_1 \partial z_2}(\frac{\partial P}{\partial z_2})(\frac{\partial P}{\partial z_1}) + \frac{\partial^2 P}{\partial z_2^2} (\frac{\partial P}{\partial z_1})^2|(w_m)}
\end{align}
where $m, k = 1,\ldots, M ; \; M =N(N-1), \; \tau \in \real, \; \tau \to \infty$ such that $|\tau|^g |\Delta_{\theta} (i \tau)| \geq \varepsilon > 0$, with $\varepsilon$ small enough. The determinant of system \eqref{sysrec} is proportional to the Vandermonde determinant of the points $\{ (w_{m,1} + \theta w_{m,2}) + (\overline w_{m,1} + \overline \theta \overline w_{m,2}) \}$.
\end{theo}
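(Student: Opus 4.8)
The plan is to reduce the isotropic conductivity equation to a Schr\"odinger-type equation and then extract the potential, hence $\sigma$, by a stationary-phase analysis of the boundary data already reconstructed in the previous sections. Following the substitution made just before the statement, one sets $\tilde\psi_\theta=\sqrt{\sigma}\,\psi_\theta$, so that $\tilde\psi_\theta$ solves $dd^c\tilde\psi_\theta=q\tilde\psi_\theta$ away from the points $a_j$, with $q=dd^c\sqrt{\sigma}/\sqrt{\sigma}$. Since $\sigma\equiv 1$ near $\partial Y$, the Cauchy data $\tilde\psi_\theta|_{\partial Y}=\mu_\theta|_{\partial Y}\,e^{\lambda(z_1+\theta z_2)}$ and $\overline\partial\tilde\psi_\theta|_{\partial Y}$ are precisely what has been recovered from $\Lambda_{\hat\sigma}$. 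First I would use Green's identity to convert the boundary integral $\int_{\partial Y}e_{i\tau,\theta}\,\overline\partial\mu_\theta$ into the interior integral $\int_Y e_{i\tau,\theta}\,q\mu_\theta$; this establishes the first equality in \eqref{sysrec} and turns a known quantity into an oscillatory integral governed by the unknown potential $q$.

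The heart of the matter is the asymptotics of this oscillatory integral as $\lambda=i\tau$, $\tau\to\infty$. With this choice $e_{i\tau,\theta}=\exp i\tau\Phi_\theta$, where $\Phi_\theta=(w_1+\theta w_2)+(\overline w_1+\overline\theta\,\overline w_2)$ is a real phase restricted to $W$. Its differential on $W$ is $\omega+\overline\omega$ with $\omega=(dz_1+\theta dz_2)|_W$ holomorphic, so the critical points are exactly the zeros $w_m$ of $\omega$, which for generic $\theta$ are nondegenerate and, by Riemann--Hurwitz for a degree-$N$ plane curve, $M=N(N-1)$ in number. Using the Faddeev normalization $\mu_\theta\to 1$ (valid in the regime $|\tau|^g|\Delta_\theta(i\tau)|\ge\varepsilon$ supplied by \cite[Prop.~1.1]{HN}), I would apply the two-dimensional stationary-phase formula to localize $\int_Y e_{i\tau,\theta}q\mu_\theta$ to the $w_m$. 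Each critical point contributes a leading term of order $\tau^{-1}$ whose amplitude is the value $\tfrac{dd^c\sqrt{\sigma}}{\sqrt{\sigma}\,dd^c|z|^2}|_W(w_m)$, multiplied by the inverse square root of the Hessian of $\Phi_\theta$ at $w_m$ (which, written through the derivatives of $P$, produces the explicit factor $|\partial P/\partial z_1|^3/|\cdots|$) and by $\exp i\tau[(w_{m,1}+\theta w_{m,2})+(\overline w_{m,1}+\overline\theta\overline w_{m,2})]$; the prefactor $\tau(1+o(1))$ compensates the $\tau^{-1}$ decay.

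It remains to invert for the unknown amplitudes. Differentiating $k$ times in $\tau$ brings down a factor $(i\Phi_\theta(w_m))^k$ at each critical point, so at leading order the contribution of $w_m$ carries $\frac{d^k}{d\tau^k}e^{i\tau\Phi_\theta(w_m)}=(i\Phi_\theta(w_m))^k e^{i\tau\Phi_\theta(w_m)}$. For $k=1,\dots,M$ this gives a linear system whose coefficient matrix is a Vandermonde matrix in the values $\Phi_\theta(w_m)$, pairwise distinct for generic $\theta$; hence its determinant is nonzero and \eqref{sysrec} can be solved for $\tfrac{dd^c\sqrt{\sigma}}{\sqrt{\sigma}\,dd^c|z|^2}|_W(w_m)$ at every $w_m$. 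Letting $\theta$ range over a full-measure set, the critical points $w_m(\theta)$ sweep out $W$, which recovers $q$ pointwise on $Y$; solving the elliptic problem $dd^c\sqrt{\sigma}=q\sqrt{\sigma}$ on $Y$ with $\sqrt{\sigma}|_{\partial Y}=1$ then yields $\sqrt{\sigma}$, and hence $\sigma$.

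The step I expect to be the main obstacle is the rigorous, uniform stationary-phase expansion: one must control the error $\mu_\theta-1$ together with the remainder of the asymptotic expansion \emph{uniformly} in the number of derivatives $k$ (up to $M$) and in $\tau$, and justify differentiating under the integral sign across the sources $a_j$ and up to $\partial Y$. A secondary difficulty is the treatment of non-generic $\theta$ and of the self-intersection points of $W$, where the forms $\omega_k$ and the solutions must be pulled back to a normalization, together with the covering argument guaranteeing that the critical sets $\{w_m(\theta)\}$ exhaust $Y$.
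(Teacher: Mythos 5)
The paper itself does not prove this theorem: it is imported as Theorem 1.2B of \cite{HN}, and the only content added here is the one-line remark that the argument survives when $Y$ has self-intersection singularities (handled, as in the proof of Theorem \ref{teofaddeev}, by passing to a normalization of $W$). Your sketch --- Stokes/Green's identity for the first equality, stationary phase for the real phase $2\,\mathrm{Re}(z_1+\theta z_2)$ localizing at the $N(N-1)$ zeros of $(dz_1+\theta dz_2)|_W$, the Hessian-plus-volume-form computation producing the explicit amplitude, and the Vandermonde inversion in the values $(w_{m,1}+\theta w_{m,2})+(\overline w_{m,1}+\overline\theta\,\overline w_{m,2})$ --- is a faithful reconstruction of the route taken in \cite{HN}, and you correctly isolate the genuinely delicate points (uniform remainder control in $k$ and $\tau$, the error $\mu_\theta-1$ in the regime $|\tau|^g|\Delta_\theta(i\tau)|\ge\varepsilon$, and the singular/normalization issue). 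Since the paper offers nothing beyond the citation to compare against, there is no divergence to report.
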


The proof of this theorem is given in \cite{HN}, under the condition that $Sing \, Y = \emptyset$; nevertheless, the proof is still valid if $Y$ contains self-inter\-sec\-tion-type singularities.

To apply Theorem \ref{recona}, since $\tilde \psi_{\theta}|_{\partial Y} = \psi_{\theta}|_{\partial Y}$ we only need to show that the integral 
$$\int_{\partial Y} e_{\lambda, \theta} \overline \partial \mu_{\theta}(z,\lambda) = \int_{\partial Y} e^{- \overline \lambda (\overline z_1 + \overline \theta \overline z_2)} \overline \partial \psi_{\theta} (z, \lambda), \; \; \lambda \in \compl$$
can be expressed in terms of $\Lambda_{\sigma}$. This is a consequence of the following lemma.

\begin{lem}
For every $\phi \in C^1(\partial Y)$ and every $\psi \in C^1(Y)$ solution of $d \sigma d^c \psi=  ( dd^c - M)\psi = 0$ in $Y$, we have

\begin{equation}
\int_{\partial Y} \phi(\Lambda_{\sigma}-\Lambda_0)\psi = 2i \int_{\partial Y} \phi (\overline \partial \psi - \overline \partial \psi_0),
\end{equation}
where $dd^c \psi_0 = 0$ in $Y$ and $\psi_0|_{\partial Y}=\psi|_{\partial Y}$. 
\end{lem}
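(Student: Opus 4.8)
The plan is to reduce the identity to a purely algebraic manipulation of the boundary one-forms, exploiting two facts: that $\psi$ is itself a solution of the conductivity equation, and that $\sigma \equiv 1$ near $\partial Y$.

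First I would unwind the left-hand side. Since $\psi$ solves $d\sigma d^c \psi = 0$ on $Y$, the unique solution of the Dirichlet problem with boundary datum $\psi|_{\partial Y}$ is $\psi$ itself, whence $\Lambda_{\sigma}\psi = \sigma\, d^c\psi|_{\partial Y}$ by the definition of the isotropic Dirichlet-to-Neumann operator recalled in \S\ref{basic}. Likewise $\Lambda_0\psi = d^c\psi_0|_{\partial Y}$ by the definition of $\Lambda_0$ together with the choice of $\psi_0$. Now recall that in our construction $\hat\sigma \equiv I$ in a neighbourhood of $\partial X$, so that $\sigma = \sqrt{\sigma_0^2 - |\sigma_1|^2} \equiv 1$ in a neighbourhood of $\partial Y = F(\partial X)$; in particular $\sigma|_{\partial Y} = 1$. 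Hence
\[
(\Lambda_{\sigma} - \Lambda_0)\psi = \big(d^c\psi - d^c\psi_0\big)\big|_{\partial Y} = d^c h|_{\partial Y}, \qquad h := \psi - \psi_0 .
\]

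Second, I would apply the elementary identity $d^c = 2i\,\overline\partial - i\,d$ on functions, which follows at once from $d^c = i(\overline\partial - \partial)$ and $d = \partial + \overline\partial$, to write $d^c h = 2i\,\overline\partial h - i\,dh$. Substituting into the boundary integral yields
\[
\int_{\partial Y}\phi\,(\Lambda_{\sigma} - \Lambda_0)\psi = 2i\int_{\partial Y}\phi\,\overline\partial h - i\int_{\partial Y}\phi\,dh .
\]

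The final step is to show the second integral vanishes. Since $\psi_0|_{\partial Y} = \psi|_{\partial Y}$, the function $h = \psi - \psi_0$ vanishes identically on $\partial Y$; hence the pull-back of $dh$ to the closed boundary curve $\partial Y$ equals $d(h|_{\partial Y}) = 0$, because exterior differentiation commutes with restriction to $\partial Y$. Therefore $\int_{\partial Y}\phi\,dh = 0$, and we are left with $\int_{\partial Y}\phi\,(\Lambda_{\sigma} - \Lambda_0)\psi = 2i\int_{\partial Y}\phi\,\overline\partial h = 2i\int_{\partial Y}\phi\,(\overline\partial\psi - \overline\partial\psi_0)$, which is the assertion. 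I do not expect a genuine obstacle: the statement is essentially algebraic once the DtN operators are rewritten as co-normal derivatives. The only delicate point is the normalisation $\sigma|_{\partial Y} = 1$, which is precisely what lets $\Lambda_{\sigma}\psi$ collapse to $d^c\psi|_{\partial Y}$ and makes the two co-normal derivatives differ by the exact form $d^c h$; given that, the proof is the $d^c$-decomposition together with the vanishing of the tangential derivative of $h$.
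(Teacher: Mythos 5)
Your proof is correct, but it follows a genuinely different route from the paper's. The paper extends $\phi$ to a function $a \in C^1(Y)$ with $a|_{\partial Y}=\phi$, converts both sides into interior integrals by Stokes' theorem, and matches them using the pointwise identity $\partial a \wedge \overline\partial(\psi-\psi_0) = \tfrac{1}{2i}\,da\wedge d^c(\psi-\psi_0) + \tfrac12\,da\wedge d(\psi-\psi_0)$ together with a second application of Stokes to kill the term $\int_Y da\wedge d(\psi-\psi_0)$. You instead stay entirely on the boundary: you collapse $(\Lambda_\sigma-\Lambda_0)\psi$ to the pulled-back form $d^c h|_{\partial Y}$ with $h=\psi-\psi_0$, split $d^c = 2i\,\overline\partial - i\,d$, and observe that the pull-back of $dh$ to $\partial Y$ is $d(h|_{\partial Y})=0$. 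This is shorter and avoids introducing the extension $a$ altogether; the price is that it leans explicitly on the normalisation $\sigma|_{\partial Y}=1$ so that $\Lambda_\sigma\psi = d^c\psi|_{\partial Y}$. That reliance is legitimate here --- the hypothesis $d\sigma d^c = dd^c - M$ with $M$ compactly supported already encodes $\sigma\equiv 1$ near $\partial Y$, and the paper's own first display secretly uses the same fact (its Stokes computation produces a boundary term $\int_{\partial Y} a(1-\sigma)d^c\psi$ that vanishes only for this reason) --- but you were right to single it out as the one delicate point. Your observation that the tangential component of $dh$ vanishes on $\partial Y$ plays exactly the role of the paper's second Stokes step $\int_Y da\wedge dh = -\int_{\partial Y} h\,da = 0$, just localised to the boundary.
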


\begin{proof}
Let $a \in C^1(Y)$ such that $a|_{\partial Y}=\phi$. From the definition of the Dirichlet-to-Neumann operator and from Stokes' theorem, one has
$$\int_{\partial Y} \phi (\Lambda_{\sigma}-\Lambda_0)\psi = \int_{Y} (d a \wedge d^c (\psi - \psi_0) + a M\psi),$$
and, with the identity $dd^c = 2i \partial \overline \partial$, Stokes' theorem gives
$$2i \int_{\partial Y} \phi(\overline \partial \psi - \overline \partial \psi_0) = 2i\int_{Y}  \partial a  \wedge (\overline \partial \psi - \overline \partial \psi_0) + \int_Y a M \psi \; dx dy.$$
Expressing the first integrand on the right in coordinate form we get $$\partial a \wedge \overline \partial (\psi-\psi_0) = d a \wedge \overline \partial (\psi-\psi_0)= \frac{1}{2i}d a \wedge d^c (\psi - \psi_0) + \frac{1}{2}da \wedge d(\psi-\psi_0).$$
Again by Stokes' thorem we have
$$\int_{Y}da \wedge d(\psi - \psi_0) = - \int_{\partial Y} (\psi-\psi_0)da=0$$
because $\psi|_{\partial Y} = \psi_0|_{\partial Y}$. The proof follows.
\end{proof}

If we put $\phi = e^{- \overline \lambda (\overline z_1 + \overline \theta \overline z_2)}|_{\partial Y}$, we find that
\begin{align*}
\frac{1}{2 i}\int_{\partial Y} e^{- \overline \lambda (\overline z_1 + \overline \theta \overline z_2)} (\Lambda_{\sigma}-\Lambda_0)\psi_{\theta} &= \int_{\partial Y} e^{- \overline \lambda (\overline z_1 + \overline \theta \overline z_2)} (\overline \partial \psi_{\theta} - \overline \partial \psi_0) \\
&= \int_{\partial Y} e^{- \overline \lambda (\overline z_1 + \overline \theta \overline z_2)} \overline \partial \psi_{\theta},
\end{align*}

because $ \partial e^{- \overline \lambda (\overline z_1 + \overline \theta \overline z_2)} = 0$ and $\partial \overline \partial \psi_0 = 0$ on $Y$.

\section{Proof of Theorem \ref{maintheo}}

We now put together all the results of this paper to prove the main theorem and his corollary.

\begin{pfmain}
We start finding a complex structure on $X$. This is done by a standard construction, as suggested in the introduction. We consider the local form of the Euclidean metric of $\real^3$ restricted to $X$:
$$ds^2 = E dx^2 + 2F dx dy + G dy^2$$
where $x, y$ are oriented coordinates. Let $z = x+iy$, and define 
$$\mu (z) = \frac{\frac 1 2  (E-G) +iF}{\frac 1 2 (E+G) + \sqrt{EG-F^2}}.$$
Then the local homeomorphic solutions of the Beltrami equation $\frac{\partial w}{\partial \overline z}= \mu (z) \frac{\partial w}{\partial z}$ form a holomorphic atlas on $X$, which then becomes a Riemann surface.

We now embed $X$ in $\proj^3$ -- as explained in section \ref{prelim} -- as an open set of a nonsingular affine algebraic curve $V$. By proposition \ref{propBelAn}, there exists a unique $C^1$-quasiconformal diffeomorphism $F:V \to W$ with special asymptotic conditions such that $F_{\ast} \hat \sigma = \sigma$ is isotropic on $W$.

Starting from $\Lambda_{\hat \sigma}$ we first recover $\hat \psi_{\theta}(z, \lambda) |_{\partial X}$ by integral equation \eqref{fred}, and then $F|_{\partial X}$ by formula \eqref{recbel}.

Successively, from the knowledge of $F(\partial X) = \partial Y$, we reconstruct $Y$ using the formulas \eqref{system}. Finally we can reconstruct $\sigma$ on $Y \setminus Sing(Y)$ with the help of Theorem \ref{recona} and the remarks in section \ref{reconstruction}.

If $\tilde Y$, $\tilde \sigma$ and $\tilde F : X \to \tilde Y$ are as in the statement of the theorem, then $\Psi = \tilde F \circ F^{-1} : Y \to \tilde Y$ is weakly holomorphic because $\tilde F$ satisfies the same Beltrami equation as $F$. By properties of $F$ we have that $\Psi : Y \setminus Sing(Y) \to \tilde Y \setminus \Psi( Sing(Y))$ is a biholomorphism which can be uniquely extended to a biholomorphism $\Psi' : Y' \to \tilde Y'$, where $Y'$ and $\tilde Y'$ are normalizations of $Y$ and $\tilde Y$ respectively. Properties of $F$ allow us also to extend smoothly $\sigma$ and $\tilde \sigma$ on $Y'$ and $\tilde Y'$ as $\sigma'$ and $\tilde \sigma'$ respectively. Finally we obtain $\Psi'_{\ast} \sigma' = \tilde \sigma'$, which ends the proof.
\end{pfmain}

\begin{pfcoro}
If we require that $F$ has the special asymptotics as in proposition \ref{propBelAn}, then the whole construction in Theorem \ref{maintheo} is unique.

Taking account of this, if $\Lambda_{\sigma_1} = \Lambda_{\sigma_2}$ we have $F_1|_{\partial X} = F_2|_{\partial X}$, where $F_1, F_2$ are the special quasiconformal solutions given by proposition \ref{propBelAn} associated to $\sigma_1$ and $\sigma_2$ respectively. Thus we also obtain, from $F_1(\partial X) = F_2(\partial X)=\partial Y$ and the formulas \eqref{system}, that $F_1(X) = F_2(X)=Y$. Let $G :Y' \to Y$ be a normalization of $Y$, and $F'_j = G^{-1} \circ F_j: X \setminus F_j^{-1}(Sing(Y)) \to Y' \setminus G^{-1} (Sing(Y))$, for $j= 1,2$. Then, by construction, $F'_j$ can be extended as a global diffeomorphism between $X$ and $Y'$, for $j = 1,2$. Now, if we define the smooth isotropic conductivities on $Y'$ as $\sigma'_j = (F'_j)_{\ast} \sigma_j$, $j=1,2$, we find $\Lambda_{\sigma'_1} = \Lambda_{\sigma'_2}$, and the boundary values of the respective Faddeev-type anisotropic (resp. isotropic) solutions coincide on $\partial X$ (resp. $\partial Y'$). Consequently $\sigma'_1 = \sigma'_2$ on $Y'$.

We finally define $\Phi = {F'}_2^{-1} \circ F'_1 : \overline X \to \overline X$ which satisfies $\Phi|_{\partial X} = \mathrm{Id}$ and $\Phi_{\ast} \sigma_1 = \sigma_2$.
\end{pfcoro}

\pagebreak

\end{document}